\renewcommand*{\verbatim@font}{\sffamily}
\title{The Curse and Blessing of Not-All-Equal in k-Satisfiability}
\author{S. Cliff Liu\thanks{Department of Computer Science, Princeton University, Princeton, NJ 08540, USA.
Email: sixuel@princeton.edu.}
}
\begin{document}

\maketitle

% for removing page number
%\pagestyle{empty}
%\thispagestyle{empty}

\newcounter{dummy} \numberwithin{dummy}{section}
\newtheorem{lemma}[dummy]{Lemma}
\newtheorem{definition}[dummy]{Definition}
\newtheorem{remark}[dummy]{Remark}
\newtheorem{corollary}[dummy]{Corollary}
\newtheorem{claim}[dummy]{Claim}
\newtheorem{observation}[dummy]{Observation}

%\newcounter{dummy}
\newtheorem{theorem}{Theorem}

\renewcommand{\algorithmicrequire}{\textbf{Input:}}
\renewcommand{\algorithmicensure}{\textbf{Output:}}

\begin{abstract}
    As a natural variant of the $k$-SAT problem, NAE-$k$-SAT additionally requires the literals in each clause to take not-all-equal (NAE) truth values. In this paper, we study the worst-case time complexities of solving NAE-$k$-SAT and MAX-NAE-$k$-SAT approximation, as functions of $k$, the number of variables $n$, and the performance ratio $\delta$. The latter problem asks for a solution of at least $\delta$ times the optimal. Our main results include:
    \begin{itemize}
        \item A deterministic algorithm for NAE-$k$-SAT that is faster than the best deterministic algorithm for $k$-SAT on all $k \ge 3$. Previously, no NAE-$k$-SAT algorithm is known to be faster than $k$-SAT algorithms. For $k = 3$, we achieve an upper bound of $1.326^n$. The corresponding bound for $3$-SAT is $1.328^n$.
        \item A randomized algorithm for MAX-NAE-$k$-SAT approximation, with upper bound $(2 - \epsilon_k(\delta))^n$ where $\epsilon_k(\delta) > 0$ only depends on $k$ and $\delta$. Previously, no upper bound better than the trivial $2^n$ is known for MAX-NAE-$k$-SAT approximation on $k \ge 4$. For $\delta = 0.9$ and $k = 4$, we achieve an upper bound of $1.947^n$.
        \item A deterministic algorithm for MAX-NAE-$k$-SAT approximation. For $\delta = 0.9$ and $k = 3$, we achieve an upper bound of $1.698^n$, which is better than the upper bound $1.731^n$ of the exact algorithm for MAX-NAE-$3$-SAT.
    \end{itemize}
    Our finding sheds new light on the following question: Is NAE-$k$-SAT easier than $k$-SAT? The answer might be affirmative at least on solving the problems exactly and deterministically, while approximately solving MAX-NAE-$k$-SAT might be harder than MAX-$k$-SAT on $k \ge 4$.
\end{abstract}

\thispagestyle{empty}

\newpage

\setcounter{page}{1}

\section{Introduction}\label{intro}

%Recently, there has been a growing interest in the research of NAE-$k$-SAT \cite{DBLP:journals/talg/CyganDLMNOPSW16}\cite{DBLP:journals/siamcomp/GamarnikS17}\cite{deshpande2018threshold}.
Like the well-known $k$-SAT and MAX-$k$-SAT problems,
NAE-$k$-SAT is \textsf{NP}-complete and MAX-NAE-$k$-SAT is \textsf{APX}-complete,
and they have intimate relationships to other well-known combinatorial problems such as MAX-Set-Splitting, Hypergraph-Coloring, MAX-CUT, and MAX-XOR-$k$-SAT \cite{DBLP:journals/siamcomp/GuruswamiHS02}.
Extending the methods in solving $k$-SAT to NAE-$k$-SAT usually leads to fruitful results,
e.g., in the study of polynomial-time approximation algorithms
(see Fig.~\ref{poly_inapx} in Appendix~\ref{subsec_poly_table}).
The symmetry in NAE-$k$-SAT can also be very useful in the context of random satisfiability \cite{DBLP:conf/soda/AchlioptasCIM01, DBLP:conf/stoc/DingSS14, DBLP:conf/focs/SlySZ16}.
However, whether the NAE predicate plays a positive role in the analysis of worst-case upper bounds remains open.

In this paper, we study the exponential upper bounds for solving NAE-$k$-SAT
exactly and approximately (MAX-NAE-$k$-SAT approximation).
%The latter problem focuses on the running time to achieve certain performance ratio (the ratio of the number of satisfying clauses by the obtained assignment and that under the optimal solution).
The research in this area mainly concentrates on two separated lines: exponential-time exact algorithms and polynomial-time approximations.
Therefore, another reason motivating this work is to combine both lines by designing \emph{moderately} exponential-time algorithms to achieve performance ratio beyond the inapproximable
threshold of polynomial-time algorithms.

%Exponential upper bound is important.
%If Exponential-Time Hypothesis (ETH) is true, $k$-SAT algorithm has lower bound $(1 + \epsilon)^n$, so does MAX-$k$-SAT algorithm \cite{DBLP:journals/jcss/ImpagliazzoP01}.
%And if Strong ETH (SETH) is true, SAT has no $(2 - \epsilon)^n$ algorithm, let alone MAX-SAT \cite{DBLP:journals/jcss/ImpagliazzoPZ01}.
%Furthermore, H{\aa}stad shows that a polynomial-time algorithm with constant performance ratio strictly greater than $0.875$ for MAX-SAT would imply \textsf{P} = \textsf{NP} (see Fig.~\ref{poly_inapx}).
%In fact, approximating MAX-SAT beyond the limitation of polynomial-time algorithm remains an important open problem, and those beyond the inapproximable threshold is \textsf{NP}-hard \cite{DBLP:journals/jacm/AroraLMSS98}.
%Therefore, improving the (exponential) upper bound for MAX-SAT approximation is crucial as it is for SAT.

%Why exponential upper bounds?
Traditionally, problems in \textsf{P} are well-studied.
But due to the widely-believed Exponential-Time Hypothesis (ETH), sub-exponential-time algorithms are unlikely to exist for $k$-SAT, let alone MAX-$k$-SAT and MAX-NAE-$k$-SAT \cite{DBLP:journals/jcss/ImpagliazzoP01}.
So understanding the quality of exponential-time algorithms is important, which essentially tells us which problem is less intractable.
%Moreover, numerous conceptual breakthroughs have been discovered via continued improvements of the best exponential-time algorithms for these problems
As a strong evidence, lots of conceptual breakthroughs have been made by consecutive advancements of faster exponential-time algorithms for $k$-SAT
\cite{DBLP:conf/focs/Schoning99,DBLP:journals/jacm/PaturiPSZ05,DBLP:conf/stoc/MoserS11,DBLP:journals/siamcomp/Hertli14,liu2018ksat}.
As for approximation algorithms,
H{\aa}stad shows that a polynomial-time algorithm with performance ratio greater than $0.875$ would imply \textsf{P} = \textsf{NP} (see Fig.~\ref{poly_inapx} in Appendix~\ref{subsec_poly_table}).
%In fact, approximating MAX-SAT beyond the limitation of the current best polynomial-time algorithm remains to be a critical open problem, and that beyond the inapproximable threshold is \textsf{NP}-hard \cite{DBLP:journals/jacm/AroraLMSS98}.
Other witnesses entailing exponential upper bounds include the Linear PCP
Conjecture, which implies the non-existence of a sub-exponential time algorithm for MAX-$3$-SAT with performance ratio greater than $0.875$, assuming ETH \cite{DBLP:conf/iwpec/KimW11}.
%We also refer our readers to \cite{DBLP:conf/focs/Khot02}\cite{DBLP:journals/siamcomp/MoshkovitzR08} for detailed surveys regarding this topic.
Moreover, deterministic algorithms are also essential. Williams shows that a faster deterministic exponential-time algorithm for Circuit-SAT would imply super-polynomial circuit lower bounds for \textsf{NEXP}, which is a notoriously hard open problem \cite{DBLP:journals/siamcomp/Williams13}.
In a word, improving exponential upper bounds for solving NAE-$k$-SAT and MAX-NAE-$k$-SAT approximations
(preferably deterministically) are as crucial as those for $k$-SAT and MAX-$k$-SAT.

%\vspace{-0.31cm}
%\paragraph{Blessing of NAE.}
Since the 1980's, mounting evidence suggests that NAE-$k$-SAT should be easier than $k$-SAT.
For instance,
the other more-constrained variants called X-SAT (exactly one true literal in each clause) and X-$3$-SAT (and each clause has at most $3$ literals) can be solved in time $1.18^n$ and $1.11^n$ respectively
\cite{DBLP:journals/tcs/ByskovMS05},
which are much better than the fastest $k$-SAT algorithm (see Fig.~\ref{table_result}).
More surprisingly, Planar NAE-$3$-SAT is even in \textsf{P} \cite{moret1988planar}, while Planar $3$-SAT and Planar X-$3$-SAT remain to be \textsf{NP}-complete
\cite{DBLP:journals/siamcomp/Lichtenstein82,DBLP:journals/jal/DyerF86}.
%It is trivial to transform an NAE-$k$-SAT instance to a $k$-SAT instance by doubling the number of clauses without introducing new variables,
%thus NAE-$k$-SAT can be solved in the same time as for $k$-SAT.
However, before this work, it is unknown whether NAE-$k$-SAT algorithms can be faster than the best $k$-SAT algorithm, at least for deterministic ones.
\footnote{
By the time we submit this work, we are aware of a new result by Hansen et al. that improves the upper bound for Unique NAE-$3$-SAT (the formula guarantees to have exactly two satisfying assignments) to $1.305^n$ \cite{DBLP:conf/stoc/2019}.
The improvement on general NAE-$3$-SAT is even smaller (differs from the bound for $3$-SAT in the fourth decimal digit).
More importantly, their bound is for randomized algorithms.
%We will say more about this in the later version of this work when their full paper is available.
}

We find out that the NAE property can be used to reduce the searching space of some algorithms, which we vividly call it the \emph{blessing of NAE}, leading to a better upper bound for NAE-$k$-SAT algorithms.
The second blessing of NAE derives from the fact that any clause in an NAE-$3$-SAT instance can be represented by a degree-$2$ polynomial with $3$ variables,
which, counter-intuitively, turns into an algorithm for MAX-NAE-$3$-SAT that runs in time $\text{poly}(n) \cdot 2^{n \omega / 3}$, where $\omega$ is the matrix product exponent.
Note that any length-at-least-$4$ clause does not have this property \cite{williams2007algorithms}.
%We will talk about this in more details in \S{\ref{rw__contribution_section}}.

%\vspace{-0.31cm}
%\paragraph{Curse of NAE.}
As a counterpart, there is also a \emph{curse of NAE}.
Considering a clause in a $k$-SAT instance, whenever a literal is assigned to true, we immediately know that this clause must be satisfied.
%no matter whatever values of other literals are.
However, the above is not true in an NAE-$k$-SAT instance since we need at least two literals with different values.
This is annoying because to solve this problem,
it is common to fix the assignments of a subset of variables to eliminate some clauses or to shorten some clauses by at least one, which simplifies the problem by reducing to a smaller formula.
It turns out that such curse ruins a number of existing MAX-$k$-SAT approximation algorithms when applying to MAX-NAE-$k$-SAT, including the current best ones, making MAX-NAE-$k$-SAT much hard than MAX-$k$-SAT at least on $k \ge 4$.
%We will discuss the details in \S{\ref{rw_section}}.

%\paragraph{Organization and Contributions.}
The rest of the paper is organized as follows.
The basic notations are presented in \S{\ref{pre}}.
In \S{\ref{rw_section}} we introduce some related work in algorithms for $k$-SAT and MAX-$k$-SAT approximation, then generalize them to NAE-$k$-SAT and MAX-NAE-$k$-SAT, or point out why the generalizations do not work.
%In the related work in \S{\ref{rw_section}}, we introduce:
%(\romannumeral1) why the current best randomized algorithm for $k$-SAT does not have a better upper bound for NAE-$k$-SAT;
%(\romannumeral2) why the blessing of NAE induces a better upper bound for a deterministic algorithm;
%(\romannumeral3) a Matrix Multiplication-based exact algorithm for MAX-NAE-$(\le 3)$-SAT;
%(\romannumeral4) a Random Walk-based algorithm for MAX-NAE-$k$-SAT approximation;
%(\romannumeral5) why the current best MAT-SAT approximation algorithms apply to MAX-$k$-SAT but not to MAX-NAE-$k$-SAT (curse of NAE).
Our result on deterministic algorithms for NAE-$k$-SAT is presented in \S{\ref{det_nae}}, whose comparison results on $3 \le k \le 6$ are highlighted in Fig.~\ref{table_result}, with upper bounds formally stated in Theorem~\ref{main_k_sat_general_form}.
Later in \S{\ref{AA}}, we present our approximation algorithms with upper bounds presented in Theorem~\ref{maxksat} and Theorem~\ref{reduce_solve_analytic}.
%and some comparison results in Fig.~\ref{fig:four_figs}.
%Finally, we conclude this paper and provide some discussions in \S{\ref{summary_section}}.

%\paragraph{Organizations of the Paper.}
%We prove our worst-case upper bound for general MAX-$k$-SAT in \S{\ref{sec33}}, including MAX-E-$k$-SAT a special case.
%Our main result on MAX-$k$-SAT approximation on satisfiable formula is presented in \S{\ref{sat_reduction}}.
%We give some upper bound results on representative problems and performance ratios in \S{\ref{summary}}.

\begin{figure*}[t]

\centering
\caption{The rounded up base $c$ in the upper bound $c^n$ of our deterministic algorithm for NAE-$k$-SAT and the corresponding upper bounds in previous results and in the current fastest randomized algorithm PPSZ (marked with $*$).
Since an NAE-$k$-SAT instance is equivalent to a $k$-SAT instance with pairs of opposite-polarity clauses, any $k$-SAT algorithm can solve NAE-$k$-SAT within the same time in terms of the number of variables (omit the polynomial factor).
}\label{table_result}
\centering
\renewcommand{\arraystretch}{1.2}
\begin{tabular}{|c|c|c|c|c|c|}\hline
$k$ & This work & \cite{liu2018ksat} & \cite{DBLP:conf/stoc/MoserS11} & \cite{DBLP:journals/tcs/DantsinGHKKPRS02} & \cite{DBLP:journals/jacm/PaturiPSZ05,DBLP:journals/siamcomp/Hertli14} \\ \hline %\hline
%\cite{DBLP:journals/algorithmica/MakinoTY13}
%\cite{hertli20143}
3 & $\mathbf{1.32573}$ & $1.32793$ & $1.33334$ & $1.50001$ & $1.30704^*$ \\ %\hline

4 & $\mathbf{1.49706}$ & $1.49857$ & $1.50001$ & $1.60001$ & $1.46899^*$ \\ %\hline

5 & $\mathbf{1.59888}$ & $1.59946$ & $1.60001$ & $1.66667$ & $1.56943^*$ \\ %\hline

6 & $\mathbf{1.66624}$ & $1.66646$ & $1.66667$ & $1.71429$ & $1.63788^*$ \\ \hline

\end{tabular}
\end{figure*}

\section{Notations}\label{pre}

%To be adapted for NAE-SAT.

%\vspace{-6pt}
%\paragraph{Instance and Assignment}
Let $V=\{v_i \mid i \in [n]\}$ be a set of $n$ Boolean variables.
For all $i \in [n]$, a \emph{literal} $l_i$ is either $v_i$ or $\bar{v}_i$.
A \emph{clause} $C$ is a set of literals and an \emph{instance} $F$ is a set of clauses.
The \emph{occurrence} of a variable $v$ in $F$ is the total number of $v$ and $\bar{v}$ in $F$.
A \emph{$k$-clause} is a clause consisting of exactly $k$ literals, and a \emph{$\le k$-clause} consists of at most $k$ literals.
If every clause in $F$ is a $\le k$-clause, then $F$ is a \emph{$k$-instance}.
An \emph{assignment} $\alpha$ of $F$ is a mapping from $V$ to $\{0,1\}^n$.
A \emph{partial assignment} is the mapping restricted on $V' \subseteq V$ such that only variables in $V'$ are assigned.
A clause $C$ is said to be \emph{satisfied} by $\alpha$ if $\alpha$ assigns at least one literal to $1$ (true) in $C$, and is said to be \emph{NAE-satisfied} if $\alpha$ assigns at least one literal to $1$ and at least one literal to $0$ (false) in $C$.
$F$ is \emph{satisfiable} (resp. \emph{NAE-satisfiable}) if and only if there exists an $\alpha$ satisfying (resp. NAE-satisfying) all clauses in $F$, and we call such $\alpha$ a \emph{satisfying assignment} (resp. \emph{NAE-satisfying assignment}) of $F$.
The $k$-SAT problem asks to find a satisfying assignment of a given $k$-instance, and the NAE-$k$-SAT problems asks to find an NAE-satisfying assignment of it. %or
%\paragraph{MAX-(NAE-)$k$-SAT}
If the context is clear, we will drop the prefix \emph{NAE-}.
The \emph{(NAE-)} in any statement means that the result holds both with and without NAE.

The MAX-SAT problem asks to find an assignment
$\alpha$ of an instance $F$, such that the number of satisfied clauses in $F$ under $\alpha$ is maximized.
If $F$ is a $k$-instance, this problem is called MAX-$k$-SAT.
Analogously we can define the MAX-NAE-$k$-SAT problem.
%, which is to find an assignment that NAE-satisfies the maximum number of clauses in a $k$-instance.
By definition, a $1$-clause can never be NAE-satisfied, thus when solving the NAE-$k$-SAT or MAX-NAE-$k$-SAT approximation problem on a $k$-instance $F$,
we can safely assume that there is no $1$-clause in $F$.
Given instance $F$, let $s(\alpha)$ be the number of satisfied clauses in $F$ under assignment $\alpha$.
The \emph{optimal assignment} $\alpha^* \coloneqq \arg \max_{\alpha} s(\alpha)$ maximizes the number of satisfied clauses in $F$.
We call $\alpha$ a \emph{$\delta$-approximation assignment}
if $s(\alpha) / s(\alpha^*) \ge \delta$, then $\alpha_{\delta}$ is used to denote such $\alpha$.
Suppose for any $k$-instance on $n$ variables, algorithm $\mathcal{A}$ outputs some $\alpha_{\delta}$ (deterministically or with high probability),
then $\mathcal{A}$ has \emph{performance ratio} $\delta$ and $\mathcal{A}$ is a \emph{$\delta$-approximation algorithm} for MAX-$k$-SAT.
Further if $\mathcal{A}$ runs in $T(n)$ time, we say that MAX-$k$-SAT has a \emph{$T(n)$-time $\delta$-approximation}. Similar definitions work for MAX-NAE-$k$-SAT.
%\vspace{-12pt}
%\paragraph{Conventions}

Throughout the paper, $n$ and $m$ always denote the number of variables and number of clauses in the instance respectively and assume $m = \text{poly}(n)$. %\footnote{For (NAE-)$k$-SAT, there are at most $2^k \cdot \binom{n}{k}$ distinct clauses  in a $k$-instance. MAX-$k$-SAT with $m = \Omega(n^k)$ has a polynomial-time approximation scheme (PTAS) (\cite{DBLP:journals/jcss/AroraKK99}), so does MAX-NAE-$k$-SAT.}
%\emph{Random} always stands for \emph{uniformly at random}, and \emph{with probability} or \emph{w.h.p.} stands for \emph{with probability} or \emph{with high probability}.
All logarithms are base-two.
We use $\mathcal{O}(T(n)) = 2^{o(n)} \cdot T(n)$ to omit sub-exponential factor in $n$.

\section{Related Work and Generalizations}\label{rw_section}

%We summarize some related work in this section and show exactly where the curse and blessing of NAE are.

\subsection{Algorithms for (NAE-)$k$-SAT}\label{rw_det}

\begin{comment}
%The fastest randomized and deterministic algorithms for $k$-SAT are summarized in Fig.~\ref{table_result}.
%PPSZ has been the fastest randomized algorithm for decades, whose analysis on general case is not tight until 2014 (\cite{DBLP:journals/siamcomp/Hertli14}).
As the fastest randomized $k$-SAT algorithm \cite{DBLP:journals/jacm/PaturiPSZ05,DBLP:journals/siamcomp/Hertli14},
PPSZ uses bounded resolution or implication to fix the values of variables in the formula one by one; if no such variables, an unassigned variable selected at random is assigned to a random truth value.
The probability of setting the correct truth value of a variable $v$ is at least the probability of $v$ occurring the last among the incident variables in at least one of its critical clauses in a random permutation.
However, we do not think that such probability can be greater in NAE-$k$-SAT, because the new clause with opposite polarity in each literal after the transformation
from an NAE-$k$-SAT instance to a $k$-SAT instance has exactly the same incident variables.
More specifically, the critical clauses of $v$ are given by a so-called \emph{critical clauses tree} rooted at $v$ \cite{DBLP:journals/jacm/PaturiPSZ05}.
But for NAE-$k$-SAT, we find out that a node in this tree can have as many as $k-1$ children, which is the same as in $k$-SAT. %, and the depth of such tree remains the same too.
Therefore, the upper bound of PPSZ for NAE-$k$-SAT might be the same as that for $k$-SAT.
\end{comment}

Recently, Liu improves the upper bound of deterministic algorithms for $k$-SAT by introducing the concept of \emph{chain} \cite{liu2018ksat}.
Liu's algorithm either solves the formula in desired time or produces a large enough set of chains, which can be used to boost the derandomized local search.
To construct the \emph{generalized covering code} for the derandomized local search, one has to prove its existence, which can be done by solving a specific Linear Programming.
In \S{\ref{det_nae}}, we will give an overview of Liu's method and show a different solution to the Linear Programming for NAE-$k$-SAT, which implies better upper bounds (see Fig.~\ref{table_result}).
Note that the derandomized local search essentially derandomizes Sch\"oning's Random Walk \cite{DBLP:conf/stoc/MoserS11}.
An incremental version of Sch\"oning's Random Walk for MAX-$k$-SAT approximation will be discussed in \S{\ref{rw_apx}}.
%In the context of NAE-$k$-SAT,
We summarize Liu's result in the following theorem:
%Some of the numerical results are in Fig.~\ref{table_result}.

\begin{theorem}[\cite{liu2018ksat}]\label{liu_ksat}
    There exists a deterministic algorithm for (NAE-)$k$-SAT that runs in time $\mathcal{O}({c_k}^n)$, where
    $ c_3 = 3^{\log{\frac{4}{3}} / \log{\frac{64}{21}}}$, $c_k = (2^k - 1)^{\nu} \cdot {c_{k-1}}^{1 - k \nu}$ for $k \ge 4$ and
    \begin{equation*}
        \nu = \frac{\log(2k - 2) - \log{k} - \log{c_{k-1}}} { \log(2^k - 1) - \log(1 - (\frac{k-2}{2k-2})^k) - k \log{c_{k-1}} }.
    \end{equation*}
\end{theorem}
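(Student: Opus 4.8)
The plan is to prove the $k$-SAT bound by reconstructing Liu's chain-based derandomized local search, and then to obtain the NAE variant for free via the polarity-doubling reduction recorded in Fig.~\ref{table_result}: replacing each clause $C$ by the pair $\{C, \bar{C}\}$ turns an NAE-$k$-instance into a $k$-instance on the \emph{same} $n$ variables whose satisfying assignments are exactly the NAE-satisfying ones, so any $\mathcal{O}(c_k^{\,n})$-time $k$-SAT algorithm immediately transfers the same bound (in terms of $n$) to NAE-$k$-SAT. It therefore suffices to build and analyse the $k$-SAT algorithm.

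First I would set up the deterministic local-search backbone of Dantsin et al.~\cite{DBLP:journals/tcs/DantsinGHKKPRS02}: cover $\{0,1\}^n$ by Hamming balls of radius $\rho n$ using a covering code of size $\approx 2^{(1-H(\rho))n}$, and from each center run a depth-$\rho n$ branching search in which, for every currently unsatisfied $\le k$-clause, one of its at most $k$ literals is flipped. This costs $k^{\rho n}$ per center, and balancing the two factors recovers the Sch{\"o}ning-type base $2(k-1)/k$---precisely the quantity $\log(2k-2) - \log k$ appearing in the numerator of $\nu$. The improvement comes from refining the branching: instead of the crude factor $k$, one tracks the local configurations a clause can take, yielding the count $2^k - 1$ of satisfying (not-all-false) patterns on a $k$-clause, which is the source of the $(2^k-1)^{\nu}$ factor in the statement.

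Next I would develop the \emph{chain} machinery and its dichotomy. A chain is a maximal sequence of clauses linked through shared flipped variables along a local-search path; the key lemma, provable by a branching-measure argument, is that either the instance admits a branching solving it within $\mathcal{O}(c_k^{\,n})$ outright, or one can extract $\Omega(n)$ nearly-disjoint chains. In the second case the chains certify that the flips repairing successive clauses are correlated, so on a constant fraction of coordinates the effective search degrades from a full $k$-clause branching to the shorter-clause problem with inductive base $c_{k-1}$; combining the two regimes produces the mixture $(2^k-1)^{\nu} c_{k-1}^{\,1-k\nu}$. The existence of the matching \emph{generalized covering code}---one whose ball radii are adapted to the chain decomposition rather than uniform---is then reduced to feasibility of a Linear Program whose variables are the per-rate covering fractions and whose objective is the runtime exponent; exhibiting the multiplier $\nu$ from the statement as the optimal dual solution fixes the base, with the base case $k=3$ collapsing to the closed form $c_3 = 3^{\log(4/3)/\log(64/21)}$ and the inductive step giving the stated recursion for $k \ge 4$.

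The hardest part will be the dichotomy together with the tightness of the LP. I expect the real work to lie in showing that \emph{failing} to extract $\Omega(n)$ chains genuinely forces a fast branching (rather than merely a weaker measure decrease), and in verifying that the closed-form $\nu$ is simultaneously primal-optimal and dual-feasible, so that it emerges as the exact stationary point of the exponent rather than a mere upper bound; this is where the remaining term $\log\bigl(1 - (\tfrac{k-2}{2k-2})^k\bigr)$ in the denominator of $\nu$ must be pinned down through careful bookkeeping of the covering-radius versus branching-factor trade-off. By contrast, the NAE reduction contributes no extra difficulty here, since it preserves $n$ and only multiplies the clause count by two, an absorbed polynomial factor.
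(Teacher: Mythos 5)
The paper does not prove Theorem~\ref{liu_ksat} at all: it is imported verbatim from \cite{liu2018ksat}, and the only argument the paper supplies in its vicinity is the one-line remark under Fig.~\ref{table_result} that the conjugate-pair reduction preserves $n$, which is exactly your transfer step and is fine. So the relevant comparison is with the framework the paper itself reuses in \S\ref{det_nae} (Algorithms~\ref{Framework} and~\ref{br_k}, Lemmas~\ref{dls_upper_bound} and~\ref{lp_solution}, Theorem~\ref{main_k_sat_general_form}). Your high-level architecture --- a dichotomy between a branching phase and a derandomized local search whose generalized covering code is certified by a linear program, with $\nu$ determined by balancing the two running times --- is the right one.

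Two concrete points in your sketch are off, and both would derail an actual write-up. First, you place the inductive base $c_{k-1}$ in the wrong branch of the dichotomy: the $(2^k-1)^{\nu}\,c_{k-1}^{\,1-k\nu}$ term is the cost of the \emph{few-chains} case, where one enumerates the $2^k-1$ satisfying patterns of each of the $\nu n$ independent $k$-clauses (fixing $k\nu n$ variables) and hands the residual $(k-1)$-instance to the $(k-1)$-SAT algorithm --- exactly as in Algorithm~\ref{br_k} with $2^k-2$ replaced by $2^k-1$. The \emph{many-chains} case is the one handled by \textsf{DLS}, costing $(\tfrac{2(k-1)}{k})^{n-k\nu n}\lambda^{-\nu n}$ as in Lemma~\ref{dls_upper_bound}; the mysterious term $-\log\bigl(1-(\tfrac{k-2}{2k-2})^k\bigr)$ in the denominator of $\nu$ is not ``covering-radius bookkeeping'' but is precisely $\log\bigl(\lambda\cdot(\tfrac{2k-2}{k})^k\bigr)$ for the closed-form LP solution $\lambda = k^k/\bigl((2k-2)^k-(k-2)^k\bigr)$ over the $k$-SAT solution space $\{0,1\}^k\setminus\{0^k\}$, the analogue of Lemma~\ref{lp_solution}. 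Second, the plain ball-covering search of \cite{DBLP:journals/tcs/DantsinGHKKPRS02} does \emph{not} recover the base $2(k-1)/k$; it gives $2k/(k+1)$ (the $1.5$, $1.6$, $\dots$ column of Fig.~\ref{table_result}). Reaching $2(k-1)/k$ deterministically is the content of the Moser--Scheder derandomization \cite{DBLP:conf/stoc/MoserS11}, which you would have to invoke rather than rederive from the radius-versus-branching balance. With those two corrections the skeleton of your argument matches the cited construction.
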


\subsection{Exact Algorithms for MAX-(NAE-)$k$-SAT}\label{rw_exact}

As shown by Williams in \cite{DBLP:journals/tcs/Williams05}, MAX-$2$-SAT has an exact algorithm that runs in time exponentially less than $\mathcal{O}(2^n)$, which relies on the non-trivial faster algorithm for Matrix Multiplication.
%The method ``split-and-list''
%builds a weighted directed graph of $3 \cdot 2^{n/3}$ vertices and $3 \cdot 2^{2n/3}$ arcs and transforms the original problem to finding the heaviest triangle in this graph, which can be solved by Matrix Multiplication with exponential dimensions.
Williams's method also works for MAX-NAE-$2$-SAT and MAX-NAE-$3$-SAT, as noted in the blessing of NAE.
%For example, the polynomial $x_1 + x_2 - 2 x_1 x_2$ or $x_1 + x_2 + x_3 - x_1 x_2 - x_1 x_3 - x_2 x_3$ can represent a clause of $2$ or $3$ literals in any NAE instance,
%then each monomial can be treated as same as for a binary clause when calculating the arc weight in the graph, as same as for MAX-$2$-SAT.
Using the current fastest Matrix Multiplication by Le Gall, the results are presented below:
\begin{theorem}[\cite{DBLP:conf/issac/Gall14a,DBLP:journals/tcs/Williams05,williams2007algorithms}]\label{exact_rw_bound}
    There exist $\mathcal{O}(2^{\omega n / 3})$-time exact algorithms for MAX-$2$-SAT, MAX-NAE-$2$-SAT, and MAX-NAE-$3$-SAT, where $\omega < 2.373$ is the matrix product exponent.
\end{theorem}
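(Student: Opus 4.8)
The plan is to follow Williams's ``split-and-list'' reduction from MAX-SAT to finding a maximum-weight triangle, and then solve the latter with fast matrix multiplication. First I would partition the $n$ variables into three blocks $V_1, V_2, V_3$ of size $n/3$ each, and for each block enumerate all $2^{n/3}$ partial assignments, treating them as the vertices of one side of a tripartite graph on $3 \cdot 2^{n/3}$ vertices. A full assignment then corresponds exactly to a choice of one vertex from each block, i.e. to a triangle $(i,j,k)$ with $i \in V_1$, $j \in V_2$, $k \in V_3$. Since every $2$-clause mentions at most two variables, its two variables lie in at most two blocks; I would therefore score each $2$-clause either as a \emph{node} contribution (both variables in one block) or as an \emph{edge} contribution (variables split across two blocks), where the value counts $1$ if the (partial) assignment satisfies the clause and $0$ otherwise. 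Folding the three node weights into the three incident edges, the total edge weight of the triangle $(i,j,k)$ equals exactly the number of satisfied clauses of the corresponding assignment, so MAX-$2$-SAT reduces to computing a maximum-weight triangle.

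The second step is to solve maximum-weight triangle in time $\mathcal{O}(2^{\omega n/3})$. All weights are integers bounded in absolute value by $m = \text{poly}(n)$, so (after shifting weights to be nonnegative) I would enumerate, over the polynomially many candidate total values $W$ and the polynomially many splits of $W$ among the three edge-weight classes, and for each fixed triple of target edge weights form Boolean adjacency matrices $A, B, C$ recording which edges carry that weight; a triangle realizing the target exists iff some entry of $(A \cdot B) \wedge C$ is nonzero. Each such test costs one Boolean matrix multiplication of $2^{n/3} \times 2^{n/3}$ matrices, i.e. $\mathcal{O}((2^{n/3})^\omega) = \mathcal{O}(2^{\omega n /3})$ time, and since only $\text{poly}(n)$ tests are needed the whole computation stays within $\mathcal{O}(2^{\omega n/3})$. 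Taking the maximum feasible $W$ recovers the optimum. Replacing ``satisfied'' by ``NAE-satisfied'' throughout leaves every $2$-clause mentioning at most two variables, so the identical reduction yields the bound for MAX-NAE-$2$-SAT with no extra work.

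The only genuinely new point is MAX-NAE-$3$-SAT, since a $3$-clause may mention one variable from each of the three blocks and thus escape the node/edge scoring scheme. Here I would invoke the blessing of NAE: writing the NAE-satisfaction indicator of a clause with literals $l_1, l_2, l_3$ as a multilinear polynomial, the cubic terms cancel and one obtains the degree-$2$ identity $g(x_1,x_2,x_3) = x_1 + x_2 + x_3 - x_1 x_2 - x_1 x_3 - x_2 x_3$ after substituting $x_i = l_i$ (a variable or its complement). Every monomial of $g$ therefore involves at most two variables, hence at most two blocks, so I can distribute each clause's contribution exactly as before: single-variable monomials enter node weights and product monomials enter the appropriate edge weight, with the total triangle weight again equal to the NAE-objective value. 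The main obstacle is precisely this cancellation, and it is special to length at most $3$: for a clause of length $\ge 4$ the top-degree term $\prod_i x_i$ survives (for length $4$ with coefficient $-2$), so the indicator is not degree $2$ and cannot be split across edges, which is why the method does not extend to larger $k$. Care must also be taken that the coefficients introduced by $g$ keep all node and edge weights integers of magnitude $\text{poly}(n)$, so that the matrix-multiplication step above applies verbatim.
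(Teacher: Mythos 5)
Your proposal is correct and follows essentially the same route the paper relies on: it is Williams's split-and-list reduction to maximum-weight triangle solved by fast (Boolean) matrix multiplication over polynomially many weight targets, combined with exactly the degree-$2$ identity $1-\prod_i(1-x_i)-\prod_i x_i = x_1+x_2+x_3-x_1x_2-x_1x_3-x_2x_3$ that the paper invokes as the ``second blessing of NAE'' (the paper itself only cites this theorem rather than proving it). Your identity, the cancellation of the cubic term, and the observation that the coefficient $-2$ on $\prod_i x_i$ survives for length-$4$ clauses are all correct.
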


This result does not apply to MAX-$3$-SAT or MAX-NAE-$4$-SAT since the longest clause in the instance is equivalent to a degree-at-least-$3$ polynomial,
but
currently Rank-$3$ Tensor Contraction (the generalization of Matrix Multiplication) does not have an $O(n^{4- \epsilon})$-time algorithm for any $\epsilon > 0$.
%\cite{FOCS18/AW}.

\subsection{MAX-(NAE-)$k$-SAT Approximations}\label{rw_apx}

%The current fastest exact algorithm for MAX-SAT is given in \cite{DBLP:journals/dam/ChenK04}, with upper bound being exponential of clauses number. Improved result relying on sparse formula due to \cite{DBLP:journals/tcs/SakaiSTT17}.
The up-to-date polynomial-time MAX-$k$-SAT approximations can be found in Fig.~\ref{poly_inapx} in Appendix~\ref{subsec_poly_table}.
Hirsch gives the first algorithm to approximate MAX-$k$-SAT to arbitrary performance ratio within $(2 - \epsilon)^n$ time \cite{DBLP:journals/dam/Hirsch03}.
We call his algorithm RandomWalk (Algorithm~\ref{MAX_RW}), which is a variant of Sch\"oning's Random Walk for $k$-SAT \cite{DBLP:conf/focs/Schoning99}.
Their differences are summarized in Appendix~\ref{Hirsch_intro}, along with some intuitions behind it.
In \S{\ref{rw_subsection}} we will show that RandomWalk also works for MAX-NAE-$k$-SAT, and the upper bound analyzed by Hirsch can be tighten.
We give the result by Hirsch in the context of MAX-NAE-$k$-SAT:
\begin{theorem}[\cite{DBLP:journals/dam/Hirsch03}]\label{Hirsch_bound}
    MAX-(NAE-)$k$-SAT has an
    $\mathcal{O}((2 - \frac{2 - 2 \delta}{2k - k \delta})^n)$-time $\delta$-approximation.
    %\footnote{This result follows from the idea proposed by Hirsch, however his computation and statement are wrong. This theorem is the corrected result. See \S{\ref{rw_subsection}} for details.}
\end{theorem}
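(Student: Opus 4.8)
The plan is to analyze Hirsch's RandomWalk (Algorithm~\ref{MAX_RW}): start from a uniformly random assignment and, for $\Theta(n)$ steps, repeatedly pick a currently unsatisfied clause, pick one of its (at most $k$) literals uniformly at random, and flip the corresponding variable; restart from a fresh random assignment after each block of steps, and output the best assignment ever encountered. Fix an optimal assignment $\alpha^*$ with $s(\alpha^*) = s^*$; the entire analysis is carried out relative to $\alpha^*$ and to the \emph{success} event that the walk visits some $\alpha$ with $s(\alpha) \ge \delta s^*$. Crucially, the walk itself never needs to know $s^*$, so no guessing is required and outputting the best assignment seen suffices; the value $s^*$ and the threshold $\delta s^*$ enter only in the probabilistic bookkeeping.

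First I would establish the uniform bound $s^* \ge m/2$, which is the one place the SAT and NAE cases must be checked separately. A uniformly random assignment satisfies a $j$-clause with probability $1 - 2^{-j} \ge 1/2$ in the SAT case, and NAE-satisfies it with probability $1 - 2^{1-j} \ge 1/2$ in the NAE case whenever $j \ge 2$; since we may assume no $1$-clauses, every clause qualifies. Hence $\mathbb{E}[s(\cdot)]\ge m/2$ over a random assignment, so $s^* \ge m/2$, i.e.\ $m \le 2 s^*$.

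Next comes the key per-step claim: conditioned on the current assignment $\alpha$ satisfying $s(\alpha) < \delta s^*$ (so we are not yet done), one walk step decreases the Hamming distance to $\alpha^*$ with probability at least $q \coloneqq \frac{1-\delta}{k(2-\delta)}$. Let $B\setminus A$ denote the clauses (NAE-)satisfied by $\alpha^*$ but not by $\alpha$; then $|B\setminus A| \ge s^* - s(\alpha) > (1-\delta)s^*$. For any such clause, $\alpha$ leaves it unsatisfied (in the NAE case, all its literals agree), while $\alpha^*$ (NAE-)satisfies it, which forces at least one of its variables to differ between $\alpha$ and $\alpha^*$; flipping that variable reduces the distance, so the conditional probability of a good move given we picked a clause of $B\setminus A$ is at least $1/k$. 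The probability of picking a clause of $B\setminus A$ among the $m - s(\alpha)$ unsatisfied clauses is at least $\frac{s^* - s(\alpha)}{m - s(\alpha)}$; since this ratio is decreasing in $s(\alpha)$ (as $s^*\le m$) and decreasing in $m$, using $s(\alpha) < \delta s^*$ and $m \le 2s^*$ makes it at least $\frac{(1-\delta)s^*}{2 s^* - \delta s^*} = \frac{1-\delta}{2-\delta}$. Multiplying by $1/k$ gives the claimed $q$.

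Finally I would plug $q$ into Sch\"oning's gambler's-ruin analysis. Treating success as absorbing, while $s(\alpha) < \delta s^*$ the distance to $\alpha^*$ behaves as a walk that moves down with probability $\ge q$ and up with probability $\le 1-q$, so from distance $j$ it reaches $0$ (hence reaches $\alpha^*$ itself, a fortiori a success) with probability $\gtrsim (\frac{q}{1-q})^j$. Averaging over the initial distance $X_0 \sim \mathrm{Bin}(n,1/2)$ yields a per-restart success probability of at least $2^{-n}(1 + \frac{q}{1-q})^n = (\frac{1}{2(1-q)})^n$ up to sub-exponential factors, so $\mathcal{O}((2(1-q))^n)$ restarts suffice; substituting $q = \frac{1-\delta}{k(2-\delta)}$ gives $2(1-q) = 2 - \frac{2-2\delta}{2k - k\delta}$, as claimed. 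The main obstacle is exactly the per-step lemma: bounding the chance of landing on a clause of $B\setminus A$, where the monotonicity argument combined with the tight inequality $m \le 2s^*$ is what upgrades the naive $\frac{1-\delta}{2k}$ to the stated $\frac{1-\delta}{k(2-\delta)}$; everything else is a routine adaptation of Sch\"oning's random walk and runs uniformly across the SAT and NAE cases.
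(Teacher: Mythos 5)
Your proposal is correct and follows essentially the same route as the paper: the paper obtains Theorem~\ref{Hirsch_bound} by combining Hirsch's walk analysis (cited as Lemma~\ref{dec_lemma}, with $p_{\delta} = \frac{1-\delta}{k(m/s(\alpha^*)-\delta)}$) with the observation $s(\alpha^*) \ge m/2$, which is exactly your substitution $m \le 2s^*$ into the per-step bound $q = \frac{1-\delta}{k(2-\delta)}$. The only difference is that you re-derive the cited lemma (the $B\setminus A$ counting and the gambler's-ruin step) rather than quoting it, and your derivation is sound for both the SAT and NAE cases.
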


\begin{algorithm}[t]
\caption{RandomWalk}\label{MAX_RW}
\begin{algorithmic}[1]
\REQUIRE $k$-instance $F$
\ENSURE assignment $\hat{\alpha}$
\STATE initialize $\hat{\alpha}$ as an arbitrary assignment in $\{0,1\}^n$
\STATE choose $\alpha$ from $\{0,1\}^n$ randomly \label{rg_line}
%\STATE $\hat{\alpha} \leftarrow \alpha$
\STATE \textbf{repeat} the following for $O(n)$ steps:
%\STATE \textbf{for} $step \leftarrow 1$ \textbf{to} $3n$ \textbf{do}
    \STATE ~~~~~~\textbf{if} {$s(\alpha) > s(\hat{\alpha})$} \textbf{then} \label{rw_update1}
    \STATE ~~~~~~~~~~~~$\hat{\alpha} \leftarrow \alpha$ \label{rw_update2}
        \STATE ~~~~~~randomly choose an unsatisfied clause $C$ in $F$ \label{clause_choosing}
        \STATE ~~~~~~randomly choose a variable in $C$ and change its value in $\alpha$
\STATE \textbf{return} $\hat{\alpha}$
\end{algorithmic}
\end{algorithm}

Subsequently in \cite{DBLP:journals/tcs/EscoffierPT14}, there are three MAX-$k$-SAT approximation algorithms proposed by Escoffier et al., which are better than Hirsch's algorithm.
Being originally designed for MAX-SAT approximation, their algorithms work for MAX-$k$-SAT approximations as well:
given a $k$-instance as the input, their methods of variables splitting either reduce the problem to exponentially many sub-problems solved by a polynomial-time algorithm one by one, or to a problem with fewer variables without increasing the clause length, or to a problem with weighted and longer clauses.
However, for MAX-NAE-$k$-SAT, as discussed in the curse of NAE, there is no guarantee of the lower bound of satisfied clauses by fixing the assignments of a subset of variables (see details in Appendix~\ref{EPT_intro}).
Therefore all three algorithms based on such reduction by variables splitting do not apply to MAX-NAE-$k$-SAT approximation.
%By running three algorithms together,
We summarize the results of Escoffier et al. as below, which are currently the best for MAX-$k$-SAT approximations:
\begin{theorem}[\cite{DBLP:journals/tcs/EscoffierPT14}]\label{moderate_apx}
    %Given integer $k \ge 3$,
    If there exists an exact algorithm for solving MAX-$k$-SAT that runs in $\mathcal{O}(c^n)$ time,
    then MAX-$k$-SAT has an $\mathcal{O}(\min(2^{n(\delta - \ell) / (1 - \ell)}, c^{n \delta}, 2^{n(2 \delta - 1)}))$-time $\delta$-approximation,
    where $\ell$ is the performance ratio of any given polynomial-time approximation algorithm for MAX-$k$-SAT.
\end{theorem}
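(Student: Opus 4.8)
The plan is to establish the three time bounds separately, each by the same variable-splitting template instantiated with a different residual solver, and then to return the best of the three assignments so that the total running time is the minimum of the three expressions. Throughout, write $\mathrm{opt}=s(\alpha^*)$, and to every clause satisfied by $\alpha^*$ assign a single \emph{witness} variable, namely one carrying a true literal of that clause under $\alpha^*$. This fixes a partition of the $\mathrm{opt}$ satisfied clauses according to which variable witnesses them, and it is this oblivious counting device that drives every case.

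For the first bound, fix $r=(\delta-\ell)/(1-\ell)$ and split $V$ into $\lceil 1/r\rceil$ disjoint blocks of size about $rn$. For each block $W$ I would enumerate all $2^{rn}$ assignments of $W$; for each guess $\sigma$, substitute it into $F$ (which only shortens clauses and never raises clause length, so the residual is again a $k$-instance), run the assumed polynomial-time $\ell$-approximation on the residual over $V\setminus W$, and return the best assignment found across all blocks and guesses. The time is $\lceil 1/r\rceil\cdot 2^{rn}\cdot\mathrm{poly}(n)=\mathcal{O}(2^{rn})$. For correctness, let $p_W$ be the number of $\alpha^*$-witnesses inside $W$; since the witnesses partition the $\mathrm{opt}$ clauses among the blocks, some block $W^\star$ has $p_{W^\star}\ge r\,\mathrm{opt}$ (up to the rounding handled below). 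Guessing $\sigma=\alpha^*|_{W^\star}$ already satisfies at least $p_{W^\star}$ clauses through their true $W^\star$-literals, and the residual optimum is at least $\mathrm{opt}-p_{W^\star}$ because $\alpha^*|_{V\setminus W^\star}$ witnesses it; hence the returned value is at least $p_{W^\star}+\ell(\mathrm{opt}-p_{W^\star})=\ell\,\mathrm{opt}+(1-\ell)p_{W^\star}\ge \delta\,\mathrm{opt}$, a $\delta$-approximation.

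The third bound is the same construction with $\ell=1/2$: the better of any two complementary assignments satisfies at least half the clauses of a MAX-$k$-SAT instance, so $\ell=1/2$ is available for free, $r=2\delta-1$, and the time is $\mathcal{O}(2^{(2\delta-1)n})$ (this matches the $\delta$-dependence of EPT's weighted-longer-clause reduction). The second bound reuses the block partition but with blocks of size $\delta n$ and the exact $\mathcal{O}(c^n)$ solver in place of enumeration: for each block $W$, fix $V\setminus W$ to constants, say all-false — again without lengthening any clause — and solve the resulting MAX-$k$-SAT instance on the $\delta n$ free variables exactly in time $c^{\delta n}$, returning the best block. Since the witnesses of the $\mathrm{opt}$ clauses partition among the $\lceil 1/\delta\rceil$ blocks, the best block has $p_{W^\star}\ge\delta\,\mathrm{opt}$; setting the variables of $W^\star$ to $\alpha^*|_{W^\star}$ satisfies those clauses \emph{regardless} of the fixed complement, so the exact optimum on that residual is at least $\delta\,\mathrm{opt}$, and the total time is $\mathcal{O}(c^{\delta n})$.

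The crux — and the step I would be most careful about — is precisely this oblivious best-block argument: because $\alpha^*$ is unknown, the partition must be chosen in advance, and correctness rests on the purely combinatorial fact that, whatever $\alpha^*$ is, its witnesses cannot avoid giving some block a $(1/\#\text{blocks})$ share of $\mathrm{opt}$, which the enumeration or exact solve over that block then realizes. Two routine but necessary checks accompany it: that substituting a partial assignment or fixing variables to constants keeps every clause of length at most $k$, so the assumed MAX-$k$-SAT subroutines apply verbatim; and that the convex-combination inequality $p+\ell(\mathrm{opt}-p)\ge\delta\,\mathrm{opt}$ holds exactly at the chosen ratio, which is what pins down $r=(\delta-\ell)/(1-\ell)$ and, at $\ell=1/2$, the exponent $2\delta-1$. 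Rounding $1/r$ and $1/\delta$ up to integers perturbs the base only by a sub-exponential factor absorbed into $\mathcal{O}(\cdot)$.
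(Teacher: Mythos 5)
Your overall template is the right one and matches the source of this theorem (Escoffier et al.), whose three algorithms $\mathcal{A}$, $\mathcal{B}$, $\mathcal{C}$ are sketched in Appendix~\ref{EPT_intro}: guess the optimum on a subset of variables, hand the residual to an $\ell$-approximation (resp.\ an exact solver), and account via the witness/convex-combination identity $p+\ell(\mathrm{opt}-p)\ge\delta\,\mathrm{opt}$ at $p\ge\frac{\delta-\ell}{1-\ell}\mathrm{opt}$. Your derivation of the third bound by instantiating the first with the free ratio $\ell=1/2$ is a legitimate and cleaner route than EPT's Algorithm $\mathcal{C}$ (weighted clauses plus an exact weighted MAX-$(k+1)$-SAT solver); since no such solver beating $2^n$ is known for $k\ge 3$, both routes land on the same exponent $2\delta-1$.

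The genuine gap is in your block construction. You split $V$ into $\lceil 1/r\rceil$ \emph{disjoint} blocks ``of size about $rn$,'' but for non-integer $1/r$ this is already inconsistent ($\lceil 1/r\rceil\cdot rn>n$), and if the blocks instead have size $n/\lceil 1/r\rceil$, the pigeonhole over a partition only guarantees a block with $\mathrm{opt}/\lceil 1/r\rceil$ witnesses, which is strictly less than $r\,\mathrm{opt}$. This is a \emph{constant-factor loss in the approximation ratio}, not a sub-exponential perturbation of the running time, so your closing remark that the rounding ``is absorbed into $\mathcal{O}(\cdot)$'' is false. The failure is starkest for the $c^{n\delta}$ bound: for $\delta=0.9$ you cannot have two disjoint blocks of $0.9n$ variables, and two blocks of $n/2$ variables only certify $\mathrm{opt}/2$ witnesses in the best one. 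The standard repair — and what EPT actually do, as described in Appendix~\ref{EPT_intro} — is an overlapping covering design rather than a partition: choose integers $p,q$ with $p/q=r$ (or a rational approximation of $r$ from above), build $q$ subsets each of exactly $np/q=rn$ variables such that every variable lies in exactly $p$ of them; then double counting gives $\sum_W p_W=p\cdot\mathrm{opt}$, hence some subset carries at least $(p/q)\,\mathrm{opt}=r\,\mathrm{opt}$ witnesses while still costing only $2^{rn}$ (resp.\ $c^{rn}$) to process. With that substitution the rest of your argument — substitution never lengthens clauses, the residual optimum is at least $\mathrm{opt}-p_{W^\star}$, and the exact ratio calculation pinning down $r=(\delta-\ell)/(1-\ell)$ — goes through as written.
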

Using the current best polynomial-time approximation algorithms (see Fig.~\ref{poly_inapx} in Appendix~\ref{subsec_poly_table}) for the value $\ell$ and the MAX-$2$-SAT exact algorithm for the value $c$ (see Theorem~\ref{exact_rw_bound}), some numerical results are illustrated in Fig.~\ref{fig:max-2-sat} and Fig.~\ref{fig:max-3-sat}.
Their algorithm for MAX-$k$-SAT approximation is faster than ours for MAX-NAE-$k$-SAT except when $\delta$ closes to $1$ and $k=3$.

\section{Deterministic Algorithms for NAE-$k$-SAT}\label{det_nae}

%As mentioned in \S{\ref{intro}} and \S{\ref{rw_section}},
We transform the NAE-$k$-SAT problem to an equivalent $k$-SAT problem: for every clause in the original $k$-instance, create a new clause with opposite polarity in every literal. The new clause is called the \emph{conjugate} of the original clause, and we call these two clauses a \emph{conjugate pair}.
Two conjugate pairs are \emph{independent} if they do not share variables.
Solving the new $k$-instance by a $k$-SAT algorithm would give the same upper bound as for $k$-SAT (omit the $\mathcal{O}(1)$ factor).
However, as we discussed in the blessing of NAE, a conjugate pair has fewer satisfying assignments. We will show how to use this to derive a better upper bound for NAE-$k$-SAT.
%\footnote{As an example, the clause $\{v_1, v_2\}$ has $3$ satisfying assignments, while the conjugate pair $\{ \{v_1, v_2\}, \{\bar{v}_1, \bar{v}_2\} \}$ has only $2$.}

We adopt the same algorithmic framework (Algorithm~\ref{Framework}) from \cite{liu2018ksat}, using the conjugate pairs instead of chains.
The subroutine \textsf{BR} (for branching algorithm) is presented later.
One of the key building blocks is to construct the generalized covering code for the subroutine \textsf{DLS} (for derandomized local search, see \cite{liu2018ksat} for details) using Linear Programming.
Recall that for two assignments $a, a^* \in \{0,1\}^k$, the \emph{Hamming distance} $d(a, a^*) = \| a - a^*\|_1$ is the number of disagreed bits.
%\footnote{We are abusing the notation slightly: $a$ should be a \emph{word} instead of an assignment since we do not map it to variables.}
We have the following:
\begin{algorithm}[t]
\caption{Algorithmic Framework for NAE-$k$-SAT Algorithm}
\label{Framework}
\begin{algorithmic}[1]
\REQUIRE $k$-instance $F$
\ENSURE a satisfying assignment or \verb"Unsatisfiable"
\STATE \textsf{BR}$(F)$ solves $F$ or returns a set of independent conjugate pairs $\mathcal{P}$
\IF {$F$ is not solved}
    \STATE \textsf{DLS}$(F,\mathcal{P})$
\ENDIF
\end{algorithmic}
\end{algorithm}

\begin{lemma}\label{dls_upper_bound}
    Given $k$-instance $F$ and a set of independent conjugate pairs $\mathcal{P}$ from $F$, \textsf{DLS} runs in time $\mathcal{O}((\frac{2(k-1)}{k})^{n - k |\mathcal{P}|} \cdot \lambda^{-|\mathcal{P}|})$,
    where $\lambda$ is the solution to the following Linear Programming $\mathcal{LP}_k$ with variables $\lambda \in \mathbb{R}^+$, $\pi: A \mapsto [0, 1]$, where the \emph{solution space} $A = \{0, 1\}^k \backslash \{0^k, 1^k\}$:
    %\footnote{$A$ is called the \emph{solution space}, which in our case is just a set of $2^k-2$ assignments on $k$ variables.}
    \begin{align*}
        & \sum_{a \in A} \pi(a) = 1 \\
        & \pi(a) \ge 0 &  \forall a \in A \\
        & \lambda = \sum_{a \in A} \left( \pi(a) \cdot (\frac{1}{k-1}) ^ {d(a, a^*)} \right) & \forall a^* \in A
    \end{align*}
    %\begin{align*}
    %    & \forall a \in A,~\pi(a) \ge 0, ~\sum_{a \in A} \pi(a) = 1 \\
    %    & \lambda = \sum_{a \in A} \left( \pi(a) \cdot (\frac{1}{k-1}) ^ {d(a, %    a^*)} \right) & \forall a^* \in A .
    %\end{align*}
\end{lemma}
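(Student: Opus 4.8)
The plan is to realize \textsf{DLS} as a derandomized local search in the spirit of Schöning's random walk, but with the covering code built blockwise so as to exploit the restricted solution space on each conjugate pair. First I would split the $n$ variables into the $k|\mathcal{P}|$ variables occurring in the $|\mathcal{P}|$ independent conjugate pairs and the remaining $n - k|\mathcal{P}|$ free variables. Since the pairs are variable-disjoint, an NAE-satisfying assignment restricted to the $k$ variables of any one pair must avoid the all-equal patterns (its two conjugate $k$-clauses are simultaneously satisfied exactly when the block value is not all-equal), so that restriction lies in $A = \{0,1\}^k \setminus \{0^k, 1^k\}$. This is the blessing of NAE made quantitative: on each block the relevant search space shrinks from $2^k$ to $|A| = 2^k - 2$, so I may also confine the covering code to $A$ on each block and should be able to cover it more cheaply than a generic $k$-variable block.

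Next I would set up the covering-code bound in its probabilistic form, where the running time equals (number of code points) times the reciprocal of the local-search success probability. On each pair-block, a code point $a \in A$ repairs a fixed target $a^* \in A$ under the Schöning-type walk with probability at least $(\frac{1}{k-1})^{d(a,a^*)}$, the classical per-$k$-clause repair bound. Drawing the block's code symbol from a distribution $\pi$ on $A$ therefore gives per-block success probability $\sum_{a \in A} \pi(a) (\frac{1}{k-1})^{d(a,a^*)}$, and the third constraint of $\mathcal{LP}_k$ forces this value to equal $\lambda$ simultaneously for every possible target $a^* \in A$. Thus $\pi$ is a balanced fractional covering code whose guaranteed per-block success probability is the common value $\lambda$, and taking the largest feasible $\lambda$ minimizes the per-block cost $\lambda^{-1}$; this balance is exactly the right invariant because the algorithm does not know in advance which $a^* \in A$ is the true block-restriction of a solution.

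I would then assemble the global bound by taking products across blocks. Because the conjugate pairs are variable-disjoint, their repair contributions to the covering-code success probability multiply, so the combined guarantee on the $k|\mathcal{P}|$ pair-variables is at least $\lambda^{|\mathcal{P}|}$, costing about $\lambda^{-|\mathcal{P}|}$ code points. On the $n - k|\mathcal{P}|$ free variables I would invoke the standard Schöning/Liu covering-code analysis verbatim, where a uniform start repairs the target with probability at least $(\frac{k}{2(k-1)})^{n - k|\mathcal{P}|}$ and hence contributes the factor $(\frac{2(k-1)}{k})^{n - k|\mathcal{P}|}$. Multiplying the two parts, and derandomizing the fractional per-block code into an explicit deterministic one by the conditional-expectation construction of \cite{liu2018ksat} (which costs only a $2^{o(n)}$ overhead absorbed into $\mathcal{O}(\cdot)$), yields the claimed time $\mathcal{O}((\frac{2(k-1)}{k})^{n - k|\mathcal{P}|} \cdot \lambda^{-|\mathcal{P}|})$.

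The main obstacle is precisely this derandomization and composition step: turning the fractional, balanced distribution $\pi$ on $A$ into an explicit deterministic covering code on the pair-variables of size essentially $\lambda^{-|\mathcal{P}|}$ while preserving the uniform-over-$a^*$ success guarantee, and checking that it composes cleanly with the free-variable code under the product construction. This is where the machinery of \cite{liu2018ksat} must be imported and verified, in particular that the balance condition of $\mathcal{LP}_k$ is the invariant maintained by the product and that variable-disjointness of the pairs genuinely yields the multiplicativity of the per-block guarantees. By contrast, the remaining ingredients — that $(\frac{1}{k-1})^{d(a,a^*)}$ is a valid per-block repair lower bound and that $A$ is the correct per-pair solution space — are routine once the decomposition is in place.
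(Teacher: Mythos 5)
Your proposal is correct and takes essentially the same route as the paper: the paper's proof simply observes that a conjugate pair is a $2$-chain whose solution space is $A=\{0,1\}^k\setminus\{0^k,1^k\}$ and then instantiates the \textsf{DLS} running-time lemma of \cite{liu2018ksat} with one chain type, $n'=n-k|\mathcal{P}|$ free variables, and characteristic value $\lambda$ given by $\mathcal{LP}_k$ over $A$ --- exactly the blockwise covering-code picture you describe. The internals you sketch (per-block repair probability $(\tfrac{1}{k-1})^{d(a,a^*)}$, the balance condition over targets $a^*\in A$, multiplicativity over variable-disjoint blocks, and the derandomization) are precisely what the cited lemma encapsulates, so the paper treats them as a black box rather than re-deriving them.
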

The proof of Lemma~\ref{dls_upper_bound} can be found in Appendix~\ref{ksat_lemma}.
%The variables in $\text{LP}_A$ are $\lambda$ and $\pi(a)~(\forall a \in A)$.
%There are $2^k - 1$ variables and $2^k - 1$ equality constraints in $\mathcal{LP}_k$.
%One can work out the determinant of the coefficient matrix to see it has full rank, so the solution is unique if feasible.
In the following lemma, we provide a closed-form solution to $\mathcal{LP}_k$, which leads to the expression of an upper bound for the running time of \textsf{DLS}.
This is different from the Linear Programming proposed in \cite{liu2018ksat} dues to the differences in the solution spaces.
\begin{lemma}\label{lp_solution}
    Given integer $k \ge 3$, the solution to $\mathcal{LP}_k$ is:
    \begin{equation*}
        \lambda = \frac{k^k + (\frac{-k}{k-1})^k}{(2k-2)^k - 2 (k-2)^k + (-2)^k},
    \end{equation*}
    %and
    \begin{equation*}
        \pi(a) = \frac{(k-1)^k}{(2k-2)^k - 2 (k-2)^k + (-2)^k} \cdot (1 - (\frac{-1}{k-1})^{d(a, 0^k)}) \cdot (1 - (\frac{-1}{k-1})^{d(a, 1^k)}) \textit{~for all~} a \in A .
    \end{equation*}
\end{lemma}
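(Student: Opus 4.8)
The plan is to verify directly that the stated $\lambda$ and $\pi$ form a feasible point of $\mathcal{LP}_k$ and then, since the constraints pin down a unique point, to conclude that it is \emph{the} solution. Write $t = \frac{1}{k-1}$ and $s = -t = \frac{-1}{k-1}$, and let $K(a,a^*) \coloneqq t^{d(a,a^*)}$ denote the kernel in the third constraint. The two facts I would lean on throughout are that $K$ \emph{factorises over coordinates},
\[
K(a,a^*) = \prod_{i=1}^{k} t^{|a_i - a_i^*|},
\]
and that $s+t = 0$. First I would rewrite the claimed $\pi$: since $d(a,0^k) = |a|$ (the Hamming weight) and $d(a,1^k) = k - |a|$, the claimed numerator is $f(a) \coloneqq (1 - s^{|a|})(1 - s^{\,k - |a|})$, which depends only on $|a|$ and expands to $f(a) = (1 + s^k) - s^{|a|} - s^{\,k - |a|}$. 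In particular $f(0^k) = f(1^k) = 0$, so $f$ is supported on $A$ and restricting any sum to $A$ is harmless.

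The crux is the third (``all-equal'') constraint. I would compute $\sum_{a \in A} f(a)\,K(a,a^*)$ by summing over the whole cube $\{0,1\}^k$ (legitimate because $f$ vanishes at the two excluded points) and using the factorisation. The constant term contributes $(1+s^k)\prod_i \sum_{a_i} t^{|a_i-a_i^*|} = (1+s^k)(1+t)^k$. For the term $\sum_a s^{|a|}K(a,a^*) = \prod_i \sum_{a_i} s^{a_i} t^{|a_i - a_i^*|}$, a coordinate with $a_i^*=0$ yields $1+st$ while a coordinate with $a_i^*=1$ yields $s+t=0$; as $a^* \in A$ has at least one $1$-coordinate, the whole product vanishes. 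Symmetrically $\sum_a s^{\,k-|a|}K(a,a^*)$ vanishes because $a^*$ has at least one $0$-coordinate. Hence, for \emph{every} $a^* \in A$,
\[
\sum_{a \in A} f(a)\,K(a,a^*) = (1+s^k)(1+t)^k,
\]
independent of $a^*$ — exactly the symmetry the constraint demands. The one place the hypothesis $a^* \notin \{0^k,1^k\}$ enters is that $s+t=0$ annihilates precisely these two ``monomial'' terms.

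It then remains to normalise and match the closed forms. The same coordinate-wise trick, together with $\sum_a s^{|a|} = \sum_a s^{\,k-|a|} = (1+s)^k$, gives $Z \coloneqq \sum_{a} f(a) = (1+s^k)2^k - 2(1+s)^k$, so $\pi(a) = f(a)/Z$ and $\lambda = (1+s^k)(1+t)^k / Z$. Substituting $1+t = \frac{k}{k-1}$, $1+s = \frac{k-2}{k-1}$, and $s^k = \bigl(\frac{-1}{k-1}\bigr)^k$, and clearing the common factor $(k-1)^k$, turns $Z$ into $\frac{(2k-2)^k - 2(k-2)^k + (-2)^k}{(k-1)^k}$ and reproduces the stated expressions for $\lambda$ and $\pi$. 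Nonnegativity is then immediate: for $k \ge 3$ we have $|s| = \frac{1}{k-1} < 1$, so for every interior weight $1 \le |a| \le k-1$ both factors $1 - s^{|a|}$ and $1 - s^{\,k-|a|}$ are positive and $Z>0$, whence $\pi(a) \ge 0$ and $\sum_a \pi(a)=1$.

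Finally, to argue that this feasible point is \emph{the} solution, I would note that the constraints form a square linear system $M\pi = \lambda\mathbf{1}$, $\mathbf{1}^{\top}\pi = 1$ in the unknowns $(\pi,\lambda)$, where $M = \bigl(K(a,a^*)\bigr)_{a,a^*\in A}$ is the principal submatrix, indexed by $A$, of $\bigotimes_{i=1}^k \left(\begin{smallmatrix} 1 & t \\ t & 1\end{smallmatrix}\right)$. For $k \ge 3$ this tensor product has eigenvalues $(1+t)^{j}(1-t)^{k-j}$, all strictly positive, so it is positive definite and hence so is its principal submatrix $M$; the system therefore has a unique solution, namely the one verified above. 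I expect the main obstacle to be bookkeeping — keeping the three pieces of the sum organised and confirming that the $s+t=0$ cancellation is triggered exactly by $a^*\in A$ — together with the algebra that matches $(1+s^k)2^k - 2(1+s)^k$ to the stated denominator.
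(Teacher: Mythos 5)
Your proof is correct and follows the same overall strategy as the paper --- plug the closed form into the constraints and verify them one by one --- but it executes the two key computations differently. For the central cancellation (the vanishing of $\sum_{a} s^{|a|} K(a,a^*)$ and its mirror term), the paper uses an explicit involution on the Boolean cube: it picks a coordinate $\tilde{i}$ with $a^*_{\tilde{i}}=1$, pairs each $a$ with the string obtained by flipping that bit, and shows the paired terms cancel sign-for-sign. Your argument instead factorises the kernel over coordinates and observes that the factor at $\tilde{i}$ is $s+t=0$; this is really the same cancellation (summing over one coordinate first), but your tensor-product formulation is more systematic and also streamlines the normalisation sum, which the paper handles separately by grouping over Hamming weights and invoking the binomial theorem. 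You additionally supply something the paper omits entirely: a uniqueness argument, via positive definiteness of the principal submatrix of $\bigotimes_i \left(\begin{smallmatrix}1 & t\\ t & 1\end{smallmatrix}\right)$, justifying the article ``the'' in ``the solution to $\mathcal{LP}_k$'' --- the paper only ever verifies feasibility. The one point worth making explicit in your write-up is the positivity of the common denominator $(2k-2)^k - 2(k-2)^k + (-2)^k$ (the paper notes this follows by induction); in your framing it is exactly the statement $Z>0$, which you do assert but justify only via $|s|<1$, and you should note that this gives strict positivity of every interior $f(a)$ and hence of their sum $Z$.
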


\begin{proof}
    We prove that this is a feasible solution to $\mathcal{LP}_k$ by verifying all constraints.
    To verify $\lambda \in \mathbb{R}^+$ and $\pi(a) \ge 0~(\forall a \in A)$, it suffices to show that $(2k-2)^k - 2(k-2)^k + (-2)^k > 0$, which is immediate by an induction.

    For verification of $\sum_{a \in A} \pi(a) = 1$, we multiplying $\frac{(2k-2)^k - 2(k-2)^k + (-2)^k}{(k-1)^k}$ on both sides to get:
    \begin{align*}
        \text{LHS} &= \sum_{a \in A}\left( (1 - (\frac{-1}{k-1})^{d(a, 0^k)}) \cdot (1 - (\frac{-1}{k-1})^{d(a, 1^k)}) \right) \\
        &= \sum_{a \in A} \left( 1 - (\frac{-1}{k-1})^{d(a, 0^k)} - (\frac{-1}{k-1})^{d(a, 1^k)} + (\frac{-1}{k-1})^k \right) \\
        &= \sum_{y=1}^{k-1} \binom{k}{y} \cdot \left( 1 - (\frac{-1}{k-1})^y - (\frac{-1}{k-1})^{k - y} + (\frac{-1}{k-1})^k \right) \\
        &= \sum_{y=0}^{k} \binom{k}{y} \cdot \left( 1 - (\frac{-1}{k-1})^y - (\frac{-1}{k-1})^{k - y} + (\frac{-1}{k-1})^k \right) \\
        &= 2^k \cdot \left( 1 + (\frac{-1}{k-1})^k \right) - 2 \cdot (\frac{k-2}{k-1})^k \\
        &= \text{RHS}.
    \end{align*}
    The second equality dues to the relation $d(a, 0^k) + d(a, 1^k) = k$.
    The third equality follows from substituting $d(a, 0^k)$ with $y$ and the fact that the number of $a$'s in $A$ with $d(a, 0^k) = y$ is $\binom{k}{y}$.
    The fourth equality adds two terms with value $0$ to the sum.
    The last equality is the Binomial theorem.

    Similarly, multiplying $\frac{(2k-2)^k - 2(k-2)^k + (-2)^k}{(k-1)^k}$ on both sides of $\lambda = \sum_{a \in A} \left( \pi(a) \cdot (\frac{1}{k-1}) ^ {d(a, a^*)} \right)$ to get:
    \begin{align*}
        \text{RHS} &= \sum_{a \in A}\left( (1 - (\frac{-1}{k-1})^{d(a, 0^k)}) \cdot (1 - (\frac{-1}{k-1})^{d(a, 1^k)}) \cdot (\frac{1}{k-1})^{d(a, a^*)} \right) \\
        &= \sum_{a \in \{0,1\}^k} \left( 1 - (\frac{-1}{k-1})^{d(a, 0^k)} - (\frac{-1}{k-1})^{d(a, 1^k)} + (\frac{-1}{k-1})^k \right) \cdot (\frac{1}{k-1})^{d(a, a^*)} \\
        &= (\frac{k}{k-1})^k + (\frac{-1}{k-1})^k \cdot (\frac{k}{k-1})^k \\
        &- \sum_{a \in \{0,1\}^k} \left( (-1)^{d(a,0^k)} \cdot (\frac{1}{k-1})^{d(a,0^k) + d(a, a^*)} \right) \\
        &- \sum_{a \in \{0,1\}^k} \left( (-1)^{d(a,1^k)} \cdot (\frac{1}{k-1})^{d(a,1^k) + d(a, a^*)} \right)\\
        &= (\frac{k}{k-1})^k + (\frac{-1}{k-1})^k \cdot (\frac{k}{k-1})^k \\
        &= \text{LHS}.
    \end{align*}
    The third equality is by noticing that $a^*$ is symmetric with respect to $0^k$ and $1^k$ and applying the Binomial theorem.
    For the fourth equality to hold we need to prove that the sums in the fourth line and fifth line are equal to $0$.
    By symmetry we only need to prove that
    \begin{equation}
        \sum_{a \in \{0,1\}^k} \left( (-1)^{d(a,0^k)} \cdot (\frac{1}{k-1})^{d(a,0^k) + d(a, a^*)} \right) = 0. \label{term_value_0}
    \end{equation}
    Observe that for $a^*$ to be an NAE-assignment,
    there must exist $\tilde{i} \in [k]$ such that the $\tilde{i}$-th bit $a^*_{\tilde{i}} = 1$.
    Now we partition $\{0,1\}^k$ into two sets $S_0, S_1$ depending on whether the $\tilde{i}$-th bit is $1$.
    We observe the following bijection: for each $a \in S_0$, negate the $\tilde{i}$-th bit to get $a' \in S_1$.
    Then it must be that
    \begin{equation*}
        d(a, 0^k) + d(a, a^*) = (d(a', 0^k) + 1) + (d(a', a^*) - 1)
    \end{equation*}
    since $a^*_{\tilde{i}} = a_{\tilde{i}} = 1 - a'_{\tilde{i}} = 1$.
    As a result, for each $a \in S_0$, there is $a' \in S_1$ such that the corresponding terms of $a, a'$ in sum (\ref{term_value_0}) have the same exponent on $\frac{1}{k-1}$
    and $(-1)^{d(a, 0^k)} = - (-1)^{d(a', 0^k)}$,
    so the sum of such pair is $0$ and (\ref{term_value_0}) must be $0$.
    Therefore we verified all the constraints and proved the lemma.
 \end{proof}

%The proof of Lemma~\ref{lp_solution} can be found in Appendix~\ref{pf_lp_solution}.

We present \textsf{BR} in Algorithm~\ref{br_k}, with parameter $\nu$ to be fixed in Theorem~\ref{main_k_sat_general_form}.
What is different from the branching algorithm in \cite{liu2018ksat} is line~\ref{line_enumerate}, which is the blessing of NAE: without NAE, we cannot exclude $1^k$ from the satisfying assignments.
After fixing all variables in $\mathcal{P}$, the remaining formula is a $(k-1)$-instance due to the maximality of $\mathcal{P}$.
Note that in line~\ref{line_call_k_1} we cannot call a deterministic NAE-$(k-1)$-SAT algorithm because the remaining formula is not necessarily consisting of only conjugate pairs, i.e., it might no longer be equivalent to an NAE instance dues to the curse of NAE.
%For example, the conjugate pair $\{ \{v_1, v_2, v_3\}, \{\bar{v}_1, \bar{v}_2, \bar{v}_3\} \}$ will become $\{\bar{v}_2, \bar{v}_3\}$ after fixing $v_1 = 1$.

\begin{algorithm}[t]
%\small
\caption{Algorithm \textsf{BR}}
\label{br_k}
\begin{algorithmic}[1]
\REQUIRE $k$-instance $F$, parameter $\nu$
\ENSURE a satisfying assignment or \verb"Unsatisfiable" or a set of independent conjugate pairs $\mathcal{P}$
\STATE greedily construct a maximal set of independent conjugate pairs $\mathcal{P}$
%\STATE staring from $\mathcal{I} \leftarrow \emptyset$, \textbf{for} $1$-chain $\mathcal{S}: V(\mathcal{I}) \cap V(\mathcal{S}) = \emptyset$, \textbf{do} $\mathcal{I} \leftarrow \mathcal{I} \cup \mathcal{S}$ %\textbf{until} no such $\mathcal{S}$
\IF {$|\mathcal{P}| < \nu n$}
    \FOR {each assignment $\alpha \in \{\{0, 1\}^k \backslash \{0^k, 1^k\}\}^{|\mathcal{P}|}$ of $\mathcal{P}$} \label{line_enumerate}
        \STATE solve the remaining formula by a deterministic $(k-1)$-SAT algorithm after fixing $\alpha$ in $F$ \label{line_call_k_1}
        \STATE \textbf{return} the satisfying assignment if satisfiable
    \ENDFOR
    \RETURN \verb"Unsatisfiable"
\ELSE
    \RETURN $\mathcal{P}$
\ENDIF
\end{algorithmic}
\end{algorithm}

Obviously, \textsf{BR} runs in time $\mathcal{O}((2^k - 2)^{|\mathcal{P}|} \cdot c_{k-1}^{n - k |\mathcal{P}|})$ where $c_{k-1}$ is the base of the exponential upper bound of a given deterministic $(k-1)$-SAT algorithm.
Substitute the values of $c_k$ from Theorem~\ref{liu_ksat} and $\lambda$ from Lemma~\ref{lp_solution} into Lemma~\ref{dls_upper_bound}, we obtain that the running time of \textsf{BR} is an increasing function of $|\mathcal{P}|$, while the running time of \textsf{DLS} is a decreasing function of $|\mathcal{P}|$.
This immediately implies that the worst case is attained when two running times are equal, which gives the following main result:

\begin{theorem}[Result on Deterministic Algorithm]\label{main_k_sat_general_form}
    Given integer $k \ge 3$,
    if there exists a deterministic algorithm for $(k - 1)$-SAT that runs in time $\mathcal{O}({c_{k-1}}^n)$,
    then there exists a deterministic algorithm for NAE-$k$-SAT that runs in time $\mathcal{O}({c'_k}^n)$, where
    \begin{equation*}
        c'_k = (2^k - 2)^{\nu} \cdot {c_{k-1}}^{1 - k \nu}
    \end{equation*}
    and
    \begin{equation*}
        \nu = \frac{\log(2k - 2) - \log{k} - \log{c_{k-1}}} { \log(2^k - 2) + \log(k^k + (\frac{-k}{k-1})^k) + k \log(\frac{2k-2}{k c_{k-1}}) - \log((2k-2)^k - 2 (k-2)^k + (-2)^k)}.
    \end{equation*}
\end{theorem}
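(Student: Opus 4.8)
The plan is to read off the running time of Algorithm~\ref{Framework} as the larger of its two subroutine times and to fix the threshold $\nu$ so that these two times are balanced. Since \textsf{BR} is called first and \textsf{DLS} is run only when $F$ is not already solved, for any fixed input the total time equals (up to the polynomial cost of greedily building $\mathcal{P}$) exactly one of the two bounds, selected by whether $|\mathcal{P}|<\nu n$. I would therefore treat $|\mathcal{P}|$ as the free quantity and bound the worst case over it.

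First I would record both bounds as functions of $\mu:=|\mathcal{P}|/n$. The paragraph preceding the theorem gives the \textsf{BR} bound $\mathcal{O}((2^k-2)^{|\mathcal{P}|}\,c_{k-1}^{\,n-k|\mathcal{P}|})$, i.e.\ base $(2^k-2)^{\mu}c_{k-1}^{\,1-k\mu}$, whose log-exponent is linear in $\mu$ with slope $\log(2^k-2)-k\log c_{k-1}$; this slope is positive, so \textsf{BR} is increasing in $|\mathcal{P}|$. Lemma~\ref{dls_upper_bound} together with the closed form of $\lambda$ in Lemma~\ref{lp_solution} gives the \textsf{DLS} bound $\mathcal{O}((\tfrac{2(k-1)}{k})^{n-k|\mathcal{P}|}\lambda^{-|\mathcal{P}|})$, i.e.\ base $(\tfrac{2(k-1)}{k})^{1-k\mu}\lambda^{-\mu}$, whose log-exponent has slope $-k\log\tfrac{2(k-1)}{k}-\log\lambda$; this slope is negative, so \textsf{DLS} is decreasing in $|\mathcal{P}|$. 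Because \textsf{BR} handles precisely the regime $|\mathcal{P}|<\nu n$ while \textsf{DLS} handles $|\mathcal{P}|\ge\nu n$, each branch is maximized as $|\mathcal{P}|\to\nu n$, and the overall worst case is obtained by evaluating both bases at $\mu=\nu$.

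I would then impose the balancing equation
\[
(2^k-2)^{\nu}c_{k-1}^{\,1-k\nu}=\left(\tfrac{2(k-1)}{k}\right)^{1-k\nu}\lambda^{-\nu},
\]
take logarithms, and collect the $\nu$-terms to obtain
\[
\nu=\frac{\log\tfrac{2(k-1)}{k}-\log c_{k-1}}{\log(2^k-2)+k\log\tfrac{2(k-1)}{k}-k\log c_{k-1}+\log\lambda}.
\]
Writing $\log\tfrac{2(k-1)}{k}=\log(2k-2)-\log k$, combining $k\log\tfrac{2(k-1)}{k}-k\log c_{k-1}=k\log\tfrac{2k-2}{kc_{k-1}}$, and expanding $\log\lambda=\log(k^k+(\tfrac{-k}{k-1})^k)-\log((2k-2)^k-2(k-2)^k+(-2)^k)$ from Lemma~\ref{lp_solution} reproduces verbatim the numerator and denominator in the statement. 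At this $\nu$ the common base is $(2^k-2)^{\nu}c_{k-1}^{\,1-k\nu}=c'_k$ by definition, so the total running time is $\mathcal{O}({c'_k}^{\,n})$ as claimed.

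The algebra of the previous paragraph is routine; the step that needs genuine care is the monotonicity claim, since it is exactly what justifies collapsing the maximum over all $|\mathcal{P}|$ to the single balanced point $\mu=\nu$. Concretely I must verify $2^k-2>c_{k-1}^{\,k}$ (for \textsf{BR} increasing) and $\lambda>(\tfrac{k}{2(k-1)})^{k}$ (for \textsf{DLS} decreasing) for all $k\ge 3$. The former follows from $c_{k-1}<2$, since $(c_{k-1}/2)^k$ stays below $1-2/2^k$; the latter, after factoring $\lambda=(\tfrac{k}{2k-2})^k\cdot\frac{1+(\frac{-1}{k-1})^k}{1-2(\frac{k-2}{2(k-1)})^k+(\frac{-1}{k-1})^k}$, reduces to the strict positivity of $2(\tfrac{k-2}{2(k-1)})^k$, which holds for every $k\ge 3$. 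This is the only place where the concrete magnitudes of $c_{k-1}$ and $\lambda$ enter, and I expect to discharge it by a short direct computation rather than by the symmetry arguments used earlier.
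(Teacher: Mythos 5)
Your proposal is correct and follows essentially the same route as the paper: bound \textsf{BR} by $\mathcal{O}((2^k-2)^{|\mathcal{P}|}c_{k-1}^{\,n-k|\mathcal{P}|})$ and \textsf{DLS} via Lemma~\ref{dls_upper_bound} with the $\lambda$ from Lemma~\ref{lp_solution}, observe that the former is increasing and the latter decreasing in $|\mathcal{P}|$, and balance the two at the threshold $\nu$ to obtain $c'_k$. The only difference is that you explicitly verify the two monotonicity inequalities $2^k-2>c_{k-1}^{\,k}$ and $\lambda>(\tfrac{k}{2(k-1)})^k$ (both of your verifications check out), whereas the paper merely asserts this after substituting the values of $c_{k-1}$ and $\lambda$.
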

Using the values of $c_k$ from Theorem~\ref{liu_ksat}, we obtain all the upper bound results in Fig.~\ref{table_result}.

\section{Approximation Algorithms}\label{AA}

In this section, we present two approximation algorithms: RandomWalk and ReduceSolve, both of which work for MAX-$k$-SAT and MAX-NAE-$k$-SAT approximations.
RandomWalk can be repeated for exponential times to obtain an approximation assignment with high probability.
\footnote{Suppose the $O(n)$-step RandomWalk succeeds with probability $p$, then one can repeat it for $\mathcal{O}(1/p)$ times to get an $\mathcal{O}(1/p)$-time randomized algorithm with high probability of success.
If no ambiguity, we still call such repeating algorithm RandomWalk.}
ReduceSolve transforms the instance to another instance with fewer variables and then solves the remaining formula exactly.
%Our analysis in this section is mainly for future correlation purpose, instead of their respective worst-case upper bounds.
%Each of them has worst-case running time $c^n$ for some constant $c < 2$ depending on performance ratio $\delta$.

\subsection{Algorithm RandomWalk}\label{rw_subsection}

For those readers familiar with Sch\"oning's Random Walk \cite{DBLP:conf/focs/Schoning99},
it is obvious that Algorithm~\ref{MAX_RW} works for MAX-NAE-$k$-SAT approximation: just be aware that in line~\ref{clause_choosing}
an NAE-unsatisfied clause is chosen randomly.
We use part of Hirsch's result on Algorithm~\ref{MAX_RW} and then give a tighter analysis.

%If $\alpha$ is initialized as some $\delta$-approximation assignment (line 1), we say \emph{RandomGuess} hits a $\delta$-approximation assignment.
%If $\alpha$ has ever been some $\delta$-approximation assignment during $O(n)$ steps of Random Walk  (line 3-7), we say \emph{RandomWalk} hits a $\delta$-approximation assignment.
%Obviously, Algorithm~\ref{MAX_RW} returns a $\delta$-approximation assignment in both cases.
%We show that there is a tradeoff between the probability of \emph{RandomGuess} and \emph{RandomWalk} hitting a $\delta$-approximation assignment.

%The progress of RandomWalk is measured by Hamming distance $d(\alpha, \alpha^*)$, which is the state of a Markov Chain \cite{DBLP:conf/focs/Schoning99}.
%If we can lower bound the probability of decreasing $h(\alpha, \alpha^*)$ in each step, we can use Sch\"oning's analysis.
%Firstly we formalize \emph{decreasing probability}, then use it to analyze \emph{RandomWalk}.
%\begin{definition}\label{dec}
%    Given positive $\delta \le 1$,
%    let $T$ be the first time \emph{RandomWalk} reaches any $\delta$-approximation assignment.
%    Define \emph{decreasing probability} $\hat{p}_{\delta}$ as lower bound of probability of decreasing $h(\alpha(i), \alpha^*)$ for all $i \in [T-1]$,
%    where $\alpha(i)$ is the assignment of \emph{RandomWalk} in step $i$.
%\end{definition}

\begin{lemma}[\cite{DBLP:journals/dam/Hirsch03}]\label{dec_lemma}
    For any $k$-instance $F$, RandomWalk returns a $\delta$-approximation assignment (resp. NAE-assignment) of $F$ with probability at least $(2 - 2 p_{\delta})^{-n}$,
    where $p_{\delta} = \frac{1 - \delta}{k (m / s(\alpha^*) - \delta)}$ and $\alpha^*$ is an optimal assignment (resp. NAE-assignment) of $F$.
\end{lemma}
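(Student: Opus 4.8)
The plan is to adapt Sch\"oning's random-walk analysis \cite{DBLP:conf/focs/Schoning99}, taking the optimal assignment $\alpha^*$ as the walk's target and tracking the Hamming distance $d(\alpha, \alpha^*)$ of the current assignment $\alpha$. The two ingredients are (i) a uniform lower bound $p_{\delta}$ on the probability that a single step of RandomWalk decreases this distance, valid as long as $\alpha$ is not yet a $\delta$-approximation, and (ii) the standard biased-random-walk hitting estimate, averaged over the uniformly random start chosen in line~\ref{rg_line}.

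For ingredient (i), suppose the current $\alpha$ is not a $\delta$-approximation, so $s(\alpha) < \delta \cdot s(\alpha^*)$. I would first count the clauses that are (NAE-)satisfied by $\alpha^*$ but (NAE-)unsatisfied by $\alpha$: their number is at least $s(\alpha^*) - s(\alpha) > (1-\delta)\,s(\alpha^*)$. Since the total number of clauses (NAE-)unsatisfied by $\alpha$ is at most $m - s(\alpha)$, the clause $C$ chosen in line~\ref{clause_choosing} is (NAE-)satisfied by $\alpha^*$ with probability at least $\bigl(s(\alpha^*)-s(\alpha)\bigr)/\bigl(m - s(\alpha)\bigr)$. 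Viewed as a function of $s(\alpha)$, this ratio is decreasing, so its minimum over $s(\alpha) \le \delta\, s(\alpha^*)$ is attained at $s(\alpha) = \delta\, s(\alpha^*)$, giving $(1-\delta)s(\alpha^*)/\bigl(m - \delta\, s(\alpha^*)\bigr)$. Whenever $C$ is (NAE-)satisfied by $\alpha^*$ but not by $\alpha$, the two assignments must disagree on at least one variable of $C$ (otherwise $C$ would have the same (NAE-)status under both), so flipping a uniformly random variable of $C$ moves $\alpha$ strictly closer to $\alpha^*$ with probability at least $1/k$. Multiplying the two factors yields exactly $p_{\delta} = (1-\delta)/\bigl(k(m/s(\alpha^*) - \delta)\bigr)$ as a lower bound on the per-step progress probability.

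For ingredient (ii), I would argue that either the walk encounters a $\delta$-approximation at some step — in which case lines~\ref{rw_update1}--\ref{rw_update2} record it in $\hat{\alpha}$ and we are done — or it never does, in which case the $p_{\delta}$ bound applies at every step until $\alpha$ reaches $\alpha^*$ itself, which is trivially a $\delta$-approximation. (Observe that if $\alpha$ is not a $\delta$-approximation then it leaves at least one clause (NAE-)unsatisfied, so line~\ref{clause_choosing} is always well defined.) Dominating the true process by the homogeneous biased walk that steps toward the target with probability exactly $p_{\delta}$ and away with probability $1-p_{\delta}$, the classical gambler's-ruin estimate bounds the probability of reaching the target from initial distance $j$ below by $\bigl(p_{\delta}/(1-p_{\delta})\bigr)^{j}$, up to factors absorbed by the $O(n)$ step budget. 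Since line~\ref{rg_line} makes each coordinate of the start disagree with $\alpha^*$ independently with probability $1/2$, summing over the binomially distributed initial distance gives
\[
\sum_{j=0}^{n} \binom{n}{j} 2^{-n} \left(\frac{p_{\delta}}{1-p_{\delta}}\right)^{j} = 2^{-n}\left(1 + \frac{p_{\delta}}{1-p_{\delta}}\right)^{n} = \bigl(2(1-p_{\delta})\bigr)^{-n} = (2 - 2 p_{\delta})^{-n},
\]
as claimed.

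The main obstacle I anticipate is ingredient (i): making the clause-counting bound uniform over the whole run. The per-step progress probability genuinely depends on the current $s(\alpha)$, and one must verify both that the relevant ratio is monotone (so the worst case sits at the threshold $s(\alpha) = \delta\, s(\alpha^*)$) and that a disagreeing variable always exists in the sampled clause — the latter being where the $k$-SAT and NAE-$k$-SAT arguments run in parallel, since in both cases a clause of differing (NAE-)status under $\alpha$ and $\alpha^*$ forces a coordinate of disagreement. Once this uniform $p_{\delta}$ is established, the stochastic domination in ingredient (ii) and the averaging over the random start are routine.
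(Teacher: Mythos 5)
The paper does not prove this lemma itself: it is imported verbatim from Hirsch \cite{DBLP:journals/dam/Hirsch03}, so there is no in-paper proof to compare against. Your reconstruction is correct and is essentially Hirsch's own Sch\"oning-style argument: the per-step progress bound $p_{\delta}$ obtained by combining the decreasing ratio $\bigl(s(\alpha^*)-s(\alpha)\bigr)/\bigl(m-s(\alpha)\bigr)$ with the $1/k$ chance of flipping a disagreeing variable, followed by the biased-walk hitting estimate averaged over the binomial initial distance; the only loose end is the routine verification that the $O(n)$ step budget realizes the gambler's-ruin probability up to polynomial factors, which the paper's $\mathcal{O}(\cdot)$ convention absorbs.
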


To give a tighter bound, we need the following lemma to bound $m$ and $s(\alpha^*)$
by introducing the \emph{average clause length} $\eta \coloneqq (\sum_{i \in [k]} i \cdot m_i) / m$,
where $m_i$ is the number of $i$-clauses ($i \in [k]$) in $F$.

\begin{lemma}\label{km}
    Given $k$-instance $F$, let $\eta$ be its average clause length.
    If $\alpha^*$ is an optimal assignment of $F$, it must be that $m \le \frac{s(\alpha^*)}{\xi}$, where $\xi = \frac{2^{k-1} (\eta+k-2) - \eta + 1}{2^k (k-1)}$.
    If $\alpha^*$ is an optimal NAE-assignment of $F$, it must be that $m \le \frac{s(\alpha^*)}{\xi'}$, where $\xi' = \frac{1}{2}$ for $k=2$ and $\xi' = \frac{2^{k-1} (\eta+k-4) - 2\eta + 4}{2^k (k-2)}$ for $k \ge 3$.
\end{lemma}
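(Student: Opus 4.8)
The plan is to use a first-moment (averaging) argument: the optimum $s(\alpha^*)$ is at least the expected number of (NAE-)satisfied clauses under a uniformly random assignment, and then to lower bound that expectation in terms of $m$ and $\eta$ using the concavity of the per-clause satisfaction probability as a function of clause length. The closed forms $\xi,\xi'$ will turn out to be exactly the values of the relevant chords, so the whole statement reduces to a term-by-term concavity inequality.

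First I would compute the expectation. Assuming (as we may, after discarding tautological clauses and merging repeated literals) that each clause is on distinct underlying variables, a uniformly random assignment $\alpha\in\{0,1\}^n$ satisfies an $i$-clause with probability $1-2^{-i}$, and NAE-satisfies it with probability $1-2^{1-i}$ (the clause fails NAE exactly when all $i$ independent literals agree, i.e.\ with probability $2\cdot 2^{-i}$). Hence
$$\E_\alpha[s(\alpha)] = \sum_{i=1}^k m_i\,(1-2^{-i}) \qquad\text{and}\qquad \E_\alpha[s_{\mathrm{NAE}}(\alpha)] = \sum_{i=2}^k m_i\,(1-2^{1-i}),$$
and by optimality $s(\alpha^*)\ge \E_\alpha[s(\alpha)]$ in each case.

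Second, I would identify the target bound as a chord inequality applied clausewise. Set $f(i)=1-2^{-i}$, which is concave since $f''(i)=-(\ln 2)^2 2^{-i}<0$. The key algebraic step is the identity
$$\frac{2^{k-1}(i+k-2)-i+1}{2^k(k-1)} = L(i),$$
where $L$ is the unique affine function through $(1,f(1))$ and $(k,f(k))$; this is immediate once one checks the left side is affine in $i$ and equals $f(1)=\tfrac12$ at $i=1$ and $f(k)=1-2^{-k}$ at $i=k$. Concavity then gives $f(i)\ge L(i)$ for every integer $i\in\{1,\dots,k\}$, so
$$s(\alpha^*) \ge \sum_{i=1}^k m_i f(i) \ge \sum_{i=1}^k m_i L(i) = \xi\sum_{i=1}^k m_i = \xi m,$$
where the final equality is the verification that $\sum_i m_i L(i)=\xi m$ after substituting $\eta m=\sum_i i\,m_i$ and $m=\sum_i m_i$. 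This yields $m\le s(\alpha^*)/\xi$. The NAE case is identical with $g(i)=1-2^{1-i}$ (also concave), the chord now taken through the extreme admissible lengths $(2,g(2))$ and $(k,g(k))$, since NAE forbids $1$-clauses; the analogous affine identity produces $\xi'$ and $g(i)\ge L'(i)$ for $i\in\{2,\dots,k\}$.

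I expect the main obstacle to be bookkeeping rather than conceptual: matching the stated closed forms of $\xi,\xi'$ to the correct chords. The delicate point is the choice of endpoints—for SAT the admissible clause lengths span $\{1,\dots,k\}$, so the chord runs from $i=1$ to $i=k$, whereas for NAE the shortest satisfiable clause has length $2$, forcing the chord from $i=2$ to $i=k$; using the wrong endpoints would break the concavity comparison. The case $k=2$ for NAE is degenerate (the interval collapses to the single point $i=2$, every clause is a $2$-clause, and $\E_\alpha[s_{\mathrm{NAE}}]=m/2$), which I would dispatch separately to recover $\xi'=\tfrac12$.
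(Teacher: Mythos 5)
Your proposal is correct, and at the decisive step it takes a genuinely different route from the paper's proof. Both arguments open identically with the first-moment bound $s(\alpha^*) \ge \sum_{i} m_i (1 - 2^{-i})$ (resp.\ $\sum_{i \ge 2} m_i (1 - 2^{1-i})$ for NAE). The paper then uses $\eta$ to eliminate $m_1$ (resp.\ $m_2$), rewrites $s(\alpha^*)/m$ as a weighted mediant $\sum_{i=2}^k A_i m_i / \sum_{i=2}^k B_i m_i$, and shows by an explicit (and rather terse) monotonicity computation that the term-wise ratios $f_\eta(i) = A_i/B_i$ are non-increasing in $i$, so the mediant is at least $f_\eta(k) = \xi$. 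You instead observe that $i \mapsto 1-2^{-i}$ is concave, that $\xi$ viewed as a function of its clause-length argument is exactly the chord of this function through the extreme admissible lengths ($i=1$ and $i=k$ for SAT, $i=2$ and $i=k$ for NAE), and conclude via the chord inequality plus linearity that $\sum_i m_i f(i) \ge \sum_i m_i L(i) = m\,L(\eta) = m\xi$. Your endpoint checks ($L(1)=\tfrac12$, $L(k)=1-2^{-k}$, $L'(2)=\tfrac12$, $L'(k)=1-2^{1-k}$) all verify, and your separate dispatch of the degenerate $k=2$ NAE case matches the paper. What your version buys is a conceptual explanation of the closed form of $\xi$ --- it is affine in $\eta$ precisely because it is a chord, and the correct endpoints are forced by the admissible clause lengths, which is exactly the ``delicate point'' you flag --- while avoiding both the elimination of $m_1$ and the ratio-monotonicity computation. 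The paper's version, in exchange, isolates the extremal index $i=k$ explicitly and uses the same ratio-comparison style again in the adjacent Lemma~\ref{eta_theta_relation}, so the two lemmas share machinery there; if you wanted to extend your approach to that lemma you would need a separate argument.
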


\begin{proof}
    %We prove this by a probabilistic argument.
    It is easy to see that a random assignment $\alpha$ satisfies $\sum_{i \in [k]} \frac{2^i - 1}{2^i} m_i$ clauses in expectation,
    %thus $\alpha$ satisfies this number of clauses with positive probability,
    so it must be that $s(\alpha^*) \ge \sum_{i \in [k]} \frac{2^i - 1}{2^i} m_i$.
    Since $\eta = (\sum_{i \in [k]} i \cdot m_i) / m$,
    we can eliminate $m_1$ by $\eta$. By substitution, multiplication and rearranging we have:
    \begin{equation*}
        \frac{s(\alpha^*)}{m} \ge \frac{\sum_{i=2}^k (2^{i-1} (\eta+i-2) - \eta + 1) m_i}{\sum_{i=2}^k 2^i (i - 1) m_i} \ge \frac{2^{k-1} (\eta+k-2) - \eta + 1}{2^k (k-1)}.
    \end{equation*}
    For the last inequality to hold
    we need to prove that
    \begin{equation*}
        \min_{2 \le i \le k} \{ \frac{2^{i-1} (\eta+i-2) - \eta + 1}{2^i (i-1)} \} = \frac{2^{k-1} (\eta+k-2) - \eta + 1}{2^k (k-1)}
    \end{equation*}
    holds for $\eta \ge 1$.
    Let $f_{\eta}(i) = \frac{2^{i-1} (\eta+i-2) - \eta + 1}{2^i (i-1)}$, we prove that
    \begin{equation}
        f_{\eta}(i) \ge f_{\eta}(i + 1) \label{ffi}
    \end{equation}
    holds for $2 \le i \le k-1$.
    Simple computations give us that (\ref{ffi}) holds when $2^{2i} \ge 2^i (i + 1)$ or $\eta = 1$, which holds for $i \ge 2$.
    %\begin{align*}
    %    &f_{\eta}(i) \ge f_{\eta}(i + 1) \\
    %    \Leftarrow & 2^{2i}i \eta + 2^{2i} i^2 - 2^{2i+1} i - (\eta - 1) 2^{i + 1}i \ge 2^{2i}(i - 1) \eta + 2^{2i}(i^2 - 1) - 2^{2i+1}(i - 1) - 2^i(i - 1)(\eta - 1) \\
    %    \Leftarrow & 2^{2i}\eta - 2^{2i} \ge (\eta - 1) (2^i i + 2^i) \\
    %    \Leftarrow & 2^{2i} \ge 2^i (i + 1) \vee \eta = 1 .
    %\end{align*}
    Thus we proved that the minimal of $f_{\eta}(i)$ is attained when $i$ is maximized, i.e., $i = k$.

    The statement for NAE-assignment can be proved in a similar way:
    since there is no $1$-clause, we know that $s(\alpha^*) \ge \sum_{i=2}^k \frac{2^i - 2}{2^i} m_i$; eliminating $m_2$ by $\eta$ for $k \ge 3$, the lemma follows from a similar computation.
 \end{proof}

Based on Lemma~\ref{km},
one can easily show that $s(\alpha^*) \ge m / 2$ for MAX-(NAE-)$k$-SAT, thus Lemma~\ref{dec_lemma} immediately implies Theorem~\ref{Hirsch_bound}.

%Observe that the success probability is an increasing function of $\eta$,
%so the worst case is $\eta = 1$.
%By repeating \emph{RandomWalk} on $F$ for $\mathcal{O}((2 - \frac{2 - 2 \delta}{2k - \delta k})^n)$ times, we have a $\delta$-approximation assignment for $F$ w.h.p.

Our key observation is that although the probability of  success of RandomWalk is an increasing function of $\eta$,
a random guess (line~\ref{rg_line} of Algorithm~\ref{MAX_RW}) is actually not too bad when $\eta$ is small: there are many short clauses in the formula, then the optimal solution cannot satisfy too many clauses, and a random guess should not be too far away from it.
Indeed, a random guess
already yields an arbitrarily good approximation with non-negligible probability.
To show this, we first take a detour to focusing on a special subformula:

\begin{definition}\label{mss}
    Given $k$-instance $F$ and an optimal (NAE-)assignment $\alpha^*$, define its \emph{maximal (NAE-)satisfiable subformula} as $G$ being a $k$-instance consisting of all (NAE-)satisfied clauses of $F$ under $\alpha^*$.
\end{definition}
Clearly $G$ has $n$ variables, because otherwise assigning a variable outside $G$ can (NAE-)satisfy more clauses of $F$.
Analogous to what we defined for $F$,
let $w_i$ be the number of $i$-clauses in $G$ for all $i \in [k]$ and let $w \coloneqq \sum_{i \in [k]} w_i = s(\alpha^*)$, then the average clause length of $G$ is $\theta \coloneqq (\sum_{i \in [k]} i \cdot w_i) / w$.
%We also need the following definition.

\begin{lemma}\label{ab_lemma}
%\begin{definition}\label{ab_def}
    %Given $k$-instance $G$ and a variable $v$ in $G$, if the occurrence of $v$ in $G$ is upper bounded by $\tau$, then $v$ is called a $\tau$-bounded variable of $G$. Let $B_{\tau}(G)$ be the set of all $\tau$-bounded variables.
%\end{definition}
    Given $G, \theta, w$ as defined above,
    let $B_{\tau}(G)$ be the set of all variables whose occurrences are upper bounded by $\tau$.
    For any $\lambda > 0$, it must be that $|B_{\tau}(G)| \ge (1 - \frac{\theta}{\lambda}) n$, where $\tau = \frac{\lambda w}{n}$.
\end{lemma}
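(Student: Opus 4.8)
The plan is to prove the bound by a direct double-counting (averaging) argument on the total number of literal occurrences in $G$, rather than by anything specific to the (NAE-)satisfiability structure. The only facts about $G$ I will use are its number of clauses $w$, its average clause length $\theta$, and the definition of the occurrence of a variable; the fact that $G$ has $n$ variables (already established) fixes the vertex set $V$ over which I sum.

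First I would count the total number of literal occurrences in $G$ in two ways. Summing over clauses, a clause of length $i$ contributes exactly $i$ literal occurrences, so the total is $\sum_{i \in [k]} i \cdot w_i$, which by the definition $\theta = (\sum_{i \in [k]} i \cdot w_i)/w$ equals $\theta w$. On the other hand, each literal occurrence is an occurrence of exactly one variable, so the total also equals $\sum_{v \in V} \mathrm{occ}(v)$, where $\mathrm{occ}(v)$ denotes the occurrence of $v$ in $G$. Hence $\sum_{v \in V} \mathrm{occ}(v) = \theta w$.

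Next I would bound the number of variables that are \emph{not} in $B_{\tau}(G)$, i.e.\ those with $\mathrm{occ}(v) > \tau$. Writing $H \coloneqq V \setminus B_{\tau}(G)$, every $v \in H$ contributes more than $\tau$ to the sum above, so $\theta w = \sum_{v \in V} \mathrm{occ}(v) \ge \sum_{v \in H} \mathrm{occ}(v) > \tau \, |H|$. Substituting $\tau = \lambda w / n$ and rearranging gives $|H| < \theta w / \tau = \theta n / \lambda$. Since $|B_{\tau}(G)| = n - |H|$, this yields $|B_{\tau}(G)| > (1 - \theta/\lambda) n$, which implies the claimed bound (the strict inequality is harmless, and when $H = \emptyset$ the bound holds trivially).

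I do not expect a genuine obstacle here: the statement is essentially Markov's inequality applied to the occurrence distribution of $G$, and the only points requiring a little care are the two-way count of literal occurrences (to see that the total equals $\theta w$) and the degenerate case $|H| = 0$. All of the substance lives in the identity $\sum_{v \in V} \mathrm{occ}(v) = \theta w$; everything after that is arithmetic.
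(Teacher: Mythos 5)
Your proof is correct and is essentially the paper's own argument: the paper simply invokes Markov's inequality on the occurrence of a uniformly random variable of $G$ (whose expectation is $\theta w/n$ by the same two-way count of literal occurrences you perform), and your explicit double-counting of $\sum_{v} \mathrm{occ}(v) = \theta w$ followed by the bound on $|V \setminus B_{\tau}(G)|$ is exactly that inequality unfolded.
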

The proof is by Markov's inequality and omitted.
%See the proof of Lemma~\ref{ab_lemma} in Appendix~\ref{pf_ab_lemma}.
%, which is omitted due to limitation of space.
%ndom variable in $G$, the average occurrence $\mathbb{E}[X] = (\sum_{i \in [k]} i \cdot w_i) / n$.
%    Thus by Markov's inequality it must be that $\forall \lambda > 0, \Pr[X \ge \lambda w / n] \le \mathbb{E}[X] / (\lambda w / n) = \theta / \lambda$, which is $\Pr[X < \lambda w / n] \ge 1 - \theta / \lambda$.
%    So there exists at least $(1 - \theta / \lambda) n$ variables, whose occurrences are upper bounded by $\lambda w / n$.
% \end{proof}
With Lemma~\ref{ab_lemma} we obtain
the success probability of random guess (part of RandomWalk, line~\ref{rg_line} of Algorithm~\ref{MAX_RW}) in Lemma~\ref{random_guess} as below: %, whose proof is in Appendix~\ref{pf_random_guess}:

\begin{lemma}\label{random_guess}
    For any $k$-instance, RandomWalk returns a $\delta$-approximation (NAE-)assignment with probability at least $2^{- \frac{\theta n}{1 - \delta + \theta}}$.
    %where $\theta$ is the average clause length of the maximal (NAE-)satisfiable subformula.
\end{lemma}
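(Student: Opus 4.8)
The plan is to show that the \emph{random guess} alone (line~\ref{rg_line} of Algorithm~\ref{MAX_RW}) already produces a $\delta$-approximation with the claimed probability. Since RandomWalk stores the best assignment it ever encounters (lines~\ref{rw_update1}--\ref{rw_update2}), its output $\hat\alpha$ is never worse than this initial guess, so any success bound for the guess transfers verbatim to the algorithm. The whole argument treats the SAT and NAE-SAT cases uniformly, because it only uses the fact that a clause of $G$ all of whose variables are set exactly as in $\alpha^*$ stays (NAE-)satisfied.

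First I would isolate a small set of variables on which the guess must coincide with $\alpha^*$. I set $\lambda = 1-\delta+\theta$ and apply Lemma~\ref{ab_lemma} with this value: writing $\tau = \lambda w / n$, the low-occurrence set satisfies
\[
 |B_\tau(G)| \ge \Big(1 - \tfrac{\theta}{\lambda}\Big) n = \tfrac{1-\delta}{1-\delta+\theta}\, n .
\]
I then take $T \subseteq B_\tau(G)$ with $|T| = \tfrac{1-\delta}{1-\delta+\theta} n$ and put $S = V \setminus T$. Since every $v \in T$ occurs in at most $\tau$ clauses of $G$, the number of clauses of $G$ that contain a variable of $T$ is at most the sum of the $G$-occurrences of the variables in $T$, hence at most $|T|\cdot \tau \le \tfrac{1-\delta}{1-\delta+\theta} n \cdot \tfrac{(1-\delta+\theta)w}{n} = (1-\delta)w$. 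Consequently at least $w - (1-\delta)w = \delta w$ clauses of $G$ avoid $T$ entirely, i.e., involve only variables of $S$.

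Next I would estimate the success probability. Whenever the uniformly random guess $\alpha$ agrees with $\alpha^*$ on all of $S$, every clause of $G$ avoiding $T$ has all its variables set exactly as under $\alpha^*$ and therefore remains (NAE-)satisfied; there are at least $\delta w = \delta\, s(\alpha^*)$ such clauses, so $\alpha$ is a $\delta$-approximation (NAE-)assignment of $F$. The probability of this agreement event is $2^{-|S|}$, and since $|S| = n - |T| = n - \tfrac{1-\delta}{1-\delta+\theta} n = \tfrac{\theta}{1-\delta+\theta}\, n$, the success probability is at least $2^{-\theta n / (1-\delta+\theta)}$, which is exactly the claim.

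The delicate point is the choice $\lambda = 1-\delta+\theta$: the two demands on $T$ — that it be \emph{large} enough to make $2^{-|S|}$ reach the target, and \emph{light} enough (in total occurrence) that the clauses meeting it number at most $(1-\delta)w$ — unfold into $\lambda \ge 1-\delta+\theta$ and $\lambda \le 1-\delta+\theta$ respectively, so they hold simultaneously only at this single value, which is what makes the bound tight. The rest is bookkeeping: verifying that $G$ has exactly $n$ variables and $w = s(\alpha^*)$ clauses so that Lemma~\ref{ab_lemma} applies, and noting that rounding $|T|$ to an integer perturbs $|S|$ by at most one, contributing only a lower-order factor that is absorbed into the $\mathcal{O}(\cdot)$ running-time analysis downstream.
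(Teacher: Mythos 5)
Your proposal is correct and follows essentially the same route as the paper's own proof: lower-bound RandomWalk by the initial random guess, apply Lemma~\ref{ab_lemma} with $\lambda = 1-\delta+\theta$ to extract a set of low-occurrence variables whose total occurrence in $G$ is at most $(1-\delta)w$, conclude that at least $\delta w$ clauses of $G$ avoid that set, and bound the probability that the guess agrees with $\alpha^*$ on the remaining $\frac{\theta}{1-\delta+\theta}n$ variables by $2^{-\theta n/(1-\delta+\theta)}$. The paper phrases the middle step as perturbing $\alpha^*$ on the sub-$\tau$ variables rather than conditioning on agreement over $S$, but these are the same event.
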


\begin{proof}
    The probability of success of RandomWalk is lower bounded by that of random guess (line~\ref{rg_line} of Algorithm~\ref{MAX_RW}).
    By Lemma~\ref{ab_lemma} with $\lambda = 1 - \delta + \theta$,
    within $G$
    there exist at least $(1 - \frac{\theta} {1 - \delta + \theta}) n$ variables whose occurrences are upper bounded by $\frac{(1 - \delta+ \theta) w}{n}$.
    We arbitrarily choose $(1 - \frac{\theta}{1 - \delta + \theta}) n$ of them and call these variables \emph{sub-$\tau$ variables}, since they constitute a subset of $B_{\tau}(G)$.
    Note that the total occurrences of sub-$\tau$ variables in $G$ is at most:
    \begin{equation*}
        (1 - \frac{\theta}{1 - \delta + \theta}) n \cdot \frac{(1 - \delta + \theta) w}{n} = (1 - \delta) w.
    \end{equation*}
    This is the maximal number of clauses in which sub-$\tau$ variables can occur.
    We have that at least $\delta w$ clauses do not contain any sub-$\tau$ variable.
    So no matter how to change the assignments of sub-$\tau$ variables in $\alpha^*$ and let $\alpha_{\delta}$ be the altered $\alpha^*$, $\alpha_{\delta}$ still satisfies at least $\delta w$ clauses.

    Now randomly guessing an $\alpha$ from $\{0, 1\}^n$, it agrees with some $\alpha_{\delta}$ with probability at least $2^{- \frac{\theta n}{1 - \delta + \theta}}$, because the number of variables which are not sub-$\tau$ variable is $\frac{\theta n}{1 - \delta + \theta}$.
    The conclusion follows immediately.
    %\footnote{Another argument of this is by introducing \emph{Boolean subcube}, see \cite{DBLP:conf/soda/IwamaT04}.}
    It is easy to verify that our choice of $\lambda$ is optimal for maximizing the lower bound of this probability.
 \end{proof}
%Observe that the success probability $2^{- \frac{\theta}{1 - \delta + \theta}n}$ is a decreasing function of $\theta$,
%so the worst case is $\theta = k$.
%Repeating \emph{RandomGuess} on $F$ for $\mathcal{O}(2^{n k / (1 - \delta + k)})$ times, we have a $\delta$-approximation assignment for $F$ w.h.p.

%\subsection{MAX-$k$-SAT}\label{s_1}

A tighter upper bound for MAX-(NAE-)$k$-SAT relies on the following correlation of $\eta$ and $\theta$:
%(Lemma~\ref{eta_theta_relation}, proof in Appendix~\ref{pf_eta_theta_relation}).% using the following lemma.

\begin{lemma}\label{eta_theta_relation}
    Given $k$-instance $F$ ($k \ge 2$) and $\eta, \theta > 1$ as defined above,
    it must be that $\frac{2}{\theta - 1} \ge \frac{1}{\eta - 1} + \frac{1}{k - 1} \frac{2^{k - 1} - 1}{2^{k - 1}}$.
    Let $\theta'$ be the average clause length of the maximal NAE-satisfiable subformula of $F$, it must be that
    $\theta' = 2$ when $k = 2$ and
    $\frac{2}{\theta' - 2} \ge \frac{1}{\eta - 2} + \frac{1}{k - 2} \frac{2^{k - 2} - 1}{2^{k - 2}}$ when $k \ge 3$.
\end{lemma}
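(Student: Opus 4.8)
The plan is to reduce the claimed inequality to two elementary facts about the clause-length statistics of $G$ and then to a single scalar monotonicity statement; I treat the plain (non-NAE) case first, as the NAE case is completely parallel after shifting all offsets by one. Write $p_i = 1 - 2^{-i}$ for the probability that a uniformly random assignment satisfies a fixed $i$-clause. The two ingredients I isolate are: (i) the term-wise domination $w_i \le m_i$ for every $i$, which holds because $G$ collects only satisfied clauses of $F$; and (ii) the optimality bound $w = s(\alpha^*) \ge \sum_i p_i m_i$, which is the same averaging-over-random-assignments inequality already used in the proof of Lemma~\ref{km}.

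Since $w(\theta-1) = \sum_i (i-1) w_i =: S_w$ and $m(\eta-1) = \sum_i (i-1) m_i =: S_m$, the target rewrites as $\frac{2w}{S_w} \ge \frac{m}{S_m} + \frac{p_{k-1}}{k-1}$. All coefficients $(i-1)$ are nonnegative, so ingredient (i) yields $S_w \le S_m$, and it therefore suffices to prove the stronger statement $2w - m \ge \frac{p_{k-1}}{k-1} S_m$. By ingredient (ii), $2w - m \ge \sum_i (2p_i - 1)m_i = \sum_i (1 - 2^{1-i}) m_i$, so it is enough to verify, length by length, the scalar inequality $1 - 2^{1-i} \ge \frac{p_{k-1}}{k-1}(i-1)$ for $1 \le i \le k$. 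The term $i=1$ is trivial (both sides vanish), and for $i \ge 2$ this is exactly $g(i) \ge g(k)$ with $g(i) := \frac{1-2^{1-i}}{i-1}$, since $\frac{p_{k-1}}{k-1} = g(k)$. Thus the whole lemma hinges on $g$ being nonincreasing on $\{2, \dots, k\}$.

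For the NAE case I replace $p_i$ by the NAE-satisfaction probability $q_i = 1 - 2^{1-i}$ and shift every offset from $1$ to $2$; this is legitimate because the NAE instance has no $1$-clauses, so $w_1 = m_1 = 0$ and the coefficients $(i-2)$ remain nonnegative. The argument then runs verbatim, and the relevant monotone function becomes $h(i) = \frac{1 - 2^{2-i}}{i-2} = g(i-1)$, whose monotonicity is inherited from $g$. The degenerate case $k = 2$ gives $\theta' = 2$ immediately, since with no $1$-clauses every clause already has length exactly $2$.

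The crux — and the step I expect to be the main obstacle — is spotting the decomposition itself: that the two bounds point in compatible directions, namely $S_w \le S_m$ shrinking the denominator while optimality supplies $2w - m \ge \frac{p_{k-1}}{k-1} S_m$, so that after combining them everything collapses to the monotonicity of the elementary function $g$. Verifying $g(i) \ge g(i+1)$ reduces, after clearing the positive denominators, to $2^i \ge i+1$, which is routine; the only care needed is at the boundary length ($i=1$, resp.\ $i=2$ in the NAE case), where both sides of the scalar inequality are zero, and in the degenerate configurations $\theta = 1$ or $\theta' = 2$ where the left-hand side of the stated inequality is not finite.
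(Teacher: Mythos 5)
Your proof is correct, and it rests on exactly the same two ingredients as the paper's: the term-wise domination $w_i \le m_i$ and the random-assignment bound $w = s(\alpha^*) \ge \sum_i p_i m_i$ recycled from the proof of Lemma~\ref{km}, together with the elimination of $m_1$ and $w_1$ via $\eta$ and $\theta$. Where you diverge is in how the two bounds are combined at the end. The paper merges them into a single aggregate inequality, $\sum_{i=2}^k \frac{i-1}{\theta-1}m_i \ge \sum_{i=2}^k\bigl(\frac{i-1}{2\eta-2}+\frac12-\frac1{2^i}\bigr)m_i$, and then extracts the $i=k$ coefficient inequality by contradiction: assuming it fails, the failure propagates down to every index $2 \le i \le k$, forcing all $m_i=0$ and contradicting $\eta>1$. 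You instead decouple the two bounds --- $S_w \le S_m$ shrinks the denominator, and $2w - m \ge \sum_i(2p_i-1)m_i$ handles the numerator --- so that everything reduces to the term-wise scalar inequality $g(i) \ge g(k)$ for $g(i) = \frac{1-2^{1-i}}{i-1}$, verified by clearing denominators to $2^i \ge i+1$. This is a genuinely cleaner finish: it is direct rather than by contradiction, it isolates the elementary fact the paper's propagation step also secretly relies on (there the analogous step is $2^{k-2} + \frac{k-2}{2} \ge k-1$), and it makes the $i=1$ (resp.\ $i=2$) boundary term visibly vanish rather than having to be excluded. Your handling of the degenerate cases ($\theta'=2$, and $k=2$ where every clause has length exactly two) matches the paper's level of rigor; neither treatment dwells on the fact that the NAE inequality for $k\ge3$ implicitly needs $\eta,\theta'>2$ for both sides to be finite.
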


\begin{proof}
    Using $\eta = \frac{\sum_{i \in [k]} i \cdot m_i}{m}$ and $\theta = \frac{\sum_{i \in [k]} i \cdot w_i}{w}$ to eliminate $m_1$ and $w_1$, by $w \ge \sum_{i \in [k]} \frac{2^i - 1}{2^i} m_i$ (see the proof of Lemma~\ref{km}),
    and using the fact that $\forall i \in [k], w_i \le m_i$ for the left-hand side, and rearranging the right-hand side,
    we have:
    %\begin{equation*}
    %    \sum_{i = 2}^k (\frac{i - \theta}{\theta - 1} + 1)w_i \ge \sum_{i = 2}^k (\frac{1}{2} \cdot \frac{i - \eta}{\eta - 1} + \frac{2^i - 1}{2^i}) m_i \notag .
    %\end{equation*}
    \begin{equation}
        \sum_{i = 2}^k \frac{i - 1}{\theta - 1} m_i \ge \sum_{i = 2}^k (\frac{i - 1}{2 \eta - 2} + \frac{1}{2} - \frac{1}{2^i}) m_i \label{eta_theta}.
    \end{equation}
    %The inequality (\ref{eta_theta}) is by rearranging the right-hand side and using the fact that $\forall i \in [k], w_i \le m_i$ for the left-hand side.
    Now we prove that $\frac{k - 1}{\theta - 1} \ge \frac{k - 1}{2 \eta - 2} + \frac{1}{2} - \frac{1}{2^k}$.
    Assume for contradiction that $\frac{k - 1}{\theta - 1} < \frac{k - 1}{2 \eta - 2} + \frac{1}{2} - \frac{1}{2^k}$ ($k \ge 3$), it must be that:
    \begin{equation*}
        \frac{k - 2}{\theta - 1}  < (\frac{k - 1}{2 \eta - 2} + \frac{1}{2} - \frac{1}{2^k}) \cdot \frac{k - 2}{k - 1}
        = \frac{k - 2}{2 \eta - 2} + \frac{1}{2} - \frac{1}{2^{k - 1}} \cdot \frac{2^{k - 2} + \frac{k - 2}{2}}{k - 1}
        \le \frac{k - 2}{2 \eta - 2} + \frac{1}{2} - \frac{1}{2^{k - 1}}.
    \end{equation*}
    %The last inequality holds for any $k \ge 2$.
    We can continue this process to get $\frac{i - 1}{\theta - 1} < \frac{i - 1}{2 \eta - 2} + \frac{1}{2} - \frac{1}{2^i}$ for all $2 \le i \le k$. This is a contradiction since all coefficients of $m_i$ in the left-hand side of (\ref{eta_theta}) are strictly smaller than those in the right-hand side, so inequality (\ref{eta_theta}) does not hold unless $m_i = 0$ for all $2 \le i \le k$, which is a contradiction since $\eta > 1$.
    As a result, we have $\frac{k - 1}{\theta - 1} \ge \frac{k - 1}{2 \eta - 2} + \frac{1}{2} - \frac{1}{2^k}$, which is
    $\frac{2}{\theta - 1} \ge \frac{1}{\eta - 1} + \frac{1}{k - 1} \frac{2^{k - 1} - 1}{2^{k - 1}}$ as stated.

    Similar argument gives us the result for NAE in this lemma,
    and the case $k=2$ follows directly from the fact that there is no $1$-clause.
 \end{proof}

%Note that MAX-$1$-SAT can be exactly solved by Majority Algorithm, thus does not influence our analysis.
%For MAX-NAE-$2$-SAT we have $\theta = \eta = 2$ followed from the Observations in \S{\ref{pre}}.

%Recall that $f_G$ is an increasing function of $\theta$.
Since $\theta$ is upper bounded by some function of $\eta$ ,
we have that the worst case is attained when the equalities in Lemma~\ref{eta_theta_relation} are attained.
%\footnote{Note that $\eta = k$ gives $\theta > k$ under the equality condition, which is infeasible. But as we will see in the equation system, the solution always has $\eta$ less than $k/2$, so we omit the feasibility condition.}
Now the probability in Lemma~\ref{dec_lemma} is an increasing function of $\eta$, while the probability in Lemma~\ref{random_guess} is a decreasing function of $\eta$,
thus the worst case is attained when they are equal.
So we have our main result on MAX-NAE-$k$-SAT as the following.
The result on MAX-$k$-SAT can be obtained in the same way, which is omitted due to limitation of space.

\begin{theorem}[Result on RandomWalk]\label{maxksat}
    %MAX-$k$-SAT has an $\mathcal{O}(\gamma^n)$-time $\delta$-approximation, where $\gamma$ satisfies the following equation system $\mathcal{M}_k(\delta)$:
    %$$
    %\left \{
    %\begin{aligned}
    %    & \xi = \frac{2^{k-1} (\eta+k-2) - \eta + 1}{2^k (k-1)} \\
    %    & \gamma =  2^{\frac{\theta}{1 - \delta + \theta}} = 2 - \frac{2 \xi - 2 \xi \delta}{k - \xi \delta k} \\
    %    & \frac{2}{\theta - 1}  = \frac{1}{\eta - 1} + \frac{1}{k - 1} \frac{2^{k - 1} - 1}{2^{k - 1}}
    %\end{aligned}
    %\right.
    %$$
    %where integer $k \ge 2$ and constant $\delta \in [0,1]$ are given, and $\gamma, \xi, \theta, \eta$ are variables of $\mathcal{M}_k(\delta)$.

    MAX-NAE-$k$-SAT has an $\mathcal{O}(\gamma'^n)$-time $\delta$-approximation, where $\gamma'$ satisfies the following equation system $\mathcal{M'}_k(\delta)$:
    \begin{equation*}
    \left \{
    \begin{aligned}
        & \xi' = \frac{2^{k-1} (\eta+k-4) - 2\eta + 4}{2^k (k-2)} \text{~for~}k \ge 3\ \text{~and~}\xi'=\frac{1}{2}\text{~for~}k=2 \\
        & \gamma' =  2^{\frac{\theta'}{1 - \delta + \theta'}} = 2 - \frac{2 \xi' - 2 \xi' \delta}{k - \xi' \delta k} \\
        %& \xi' = \frac{2^{k-1} (\eta+k-2) - \eta + 1}{2^k (k-1)} \\
        %& \frac{2}{\theta - 1}  = \frac{1}{\eta - 1} + \frac{1}{k - 1} \frac{2^{k - 1} - 1}{2^{k - 1}}
        & \frac{2}{\theta' - 2} = \frac{1}{\eta - 2} + \frac{1}{k - 2} \frac{2^{k - 2} - 1}{2^{k - 2}}
    \end{aligned}
    \right.
    \end{equation*}
    where integer $k \ge 2$ and constant $\delta \in [0,1]$ are given, and $\gamma', \xi', \theta', \eta$ are variables of $\mathcal{M'}_k(\delta)$.
\end{theorem}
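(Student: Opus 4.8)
The plan is to run Algorithm~\ref{MAX_RW} (RandomWalk), whose success probability on a fixed instance admits \emph{two} independent lower bounds that both hold for the very same execution, and then to repeat RandomWalk enough times that at least one of the two bounds certifies success with high probability. The first bound is Lemma~\ref{dec_lemma}: RandomWalk finds a $\delta$-approximation NAE-assignment with probability at least $(2 - 2 p_\delta)^{-n}$, where $p_\delta = \frac{1-\delta}{k(m/s(\alpha^*)-\delta)}$. To convert this into a quantity depending only on the average clause length $\eta$, I would invoke the NAE half of Lemma~\ref{km}, namely $m/s(\alpha^*) \le 1/\xi'$; substituting this upper bound into $p_\delta$ yields $p_\delta \ge \frac{\xi'(1-\delta)}{k(1-\xi'\delta)}$, hence a running-time base $\gamma'_1 = 2 - \frac{2\xi' - 2\xi'\delta}{k - \xi'\delta k}$, which is exactly the rightmost expression in the second line of $\mathcal{M'}_k(\delta)$. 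The second bound is the random-guess bound of Lemma~\ref{random_guess}: RandomWalk succeeds with probability at least $2^{-\theta' n/(1-\delta+\theta')}$, giving base $\gamma'_2 = 2^{\theta'/(1-\delta+\theta')}$, the leftmost expression in that same line. Since both quantities lower bound the one success probability $p$, we get $p \ge (\min(\gamma'_1,\gamma'_2))^{-n}$, so repeating RandomWalk $\mathcal{O}((\min(\gamma'_1,\gamma'_2))^n)$ times succeeds with high probability.

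Next I would remove the dependence of $\gamma'_2$ on $\theta'$ via Lemma~\ref{eta_theta_relation}, which bounds $\theta'$ in terms of $\eta$. Because $\gamma'_2$ is increasing in $\theta'$, the worst case uses the extremal $\theta'$, i.e.\ the equality $\frac{2}{\theta'-2} = \frac{1}{\eta-2} + \frac{1}{k-2}\frac{2^{k-2}-1}{2^{k-2}}$ (the third line of the system), degenerating to $\theta' = 2$ when $k=2$. After this substitution both $\gamma'_1$ and $\gamma'_2$ are functions of $\eta$ alone, with $k$ and $\delta$ fixed.

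The crux is that $\eta$ is not under our control but is a property of the adversarially given instance, so the time for a fixed instance is $\mathcal{O}((\min(\gamma'_1(\eta),\gamma'_2(\eta)))^n)$ and the worst case is $\gamma' = \max_{\eta}\min(\gamma'_1(\eta),\gamma'_2(\eta))$ over the admissible range $\eta \in [2,k]$. I would establish two monotonicities: $\xi'$ is increasing in $\eta$ (for $k \ge 3$ its $\eta$-derivative is $\frac{2^{k-1}-2}{2^k(k-2)}>0$), and since $\frac{d}{d\xi'}\frac{\xi'}{1-\xi'\delta} = \frac{1}{(1-\xi'\delta)^2}>0$ this makes $\gamma'_1$ \emph{decreasing} in $\eta$; meanwhile $\theta'$ is increasing in $\eta$ by the third equation, and since $\frac{d}{d\theta'}\frac{\theta'}{1-\delta+\theta'} = \frac{1-\delta}{(1-\delta+\theta')^2}>0$ this makes $\gamma'_2$ \emph{increasing} in $\eta$. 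With one base decreasing and the other increasing, the pointwise minimum is unimodal and its maximum is attained exactly at the crossover $\gamma'_1 = \gamma'_2$, which is precisely the coupled equality $\gamma' = 2^{\theta'/(1-\delta+\theta')} = 2 - \frac{2\xi'-2\xi'\delta}{k-\xi'\delta k}$. Finally, since $\min(\gamma'_1(\eta),\gamma'_2(\eta)) \le \gamma'$ for every admissible $\eta$, repeating RandomWalk $\gamma'^n$ times succeeds with high probability on every instance with no knowledge of $\eta$ required, giving the claimed $\mathcal{O}(\gamma'^n)$ time.

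I expect the main obstacle to be the worst-case max-min argument rather than the algebra. One must argue cleanly that the two probability bounds are \emph{simultaneously} valid (so that taking the pointwise minimum of the \emph{bases} is legitimate), verify the two monotonicities so the crossover is genuinely the maximizer, and check that the crossover root $\eta$ lies strictly inside $[2,k]$ rather than at an endpoint; were it on the boundary, the worst case would not be described by the equality system. The $k=2$ case needs separate bookkeeping since $\xi' = 1/2$ and $\theta' = 2$ are forced. The remaining items — the monotonicity of $\xi'(\eta)$, of the two single-variable maps, and the substitution producing $\gamma'_1$ — are routine once this structure is in place.
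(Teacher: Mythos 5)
Your proposal is correct and follows essentially the same route as the paper: lower-bound the success probability of RandomWalk both by Lemma~\ref{dec_lemma} combined with Lemma~\ref{km} (giving the base $2 - \frac{2\xi'-2\xi'\delta}{k-\xi'\delta k}$) and by the random-guess bound of Lemma~\ref{random_guess} (giving $2^{\theta'/(1-\delta+\theta')}$), tie $\theta'$ to $\eta$ via the equality case of Lemma~\ref{eta_theta_relation}, and equalize the two bounds at the crossover since one is increasing and the other decreasing in $\eta$. Your explicit monotonicity checks and the caveat about the crossover lying in the interior of the admissible range of $\eta$ are more careful than the paper's own (terser) presentation, but the argument is the same.
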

%\begin{corollary}
%    MAX-$k$-SAT has an $\mathcal{O}(c^n)$-time $\delta$-approximation, where $c$ satisfies $\mathcal{M}_k(\delta)$.
%\end{corollary}
%\paragraph{Remarks on solving $M$.}
%$M$ does not have analytical solutions.
%By taking the partial derivative $\frac{\partial c(\eta)}{\partial \eta}$,
%we have that $c$ is unimodal with respect to $\eta$.
%So ternary search (like binary search but divide the feasible region into $3$ parts in each iteration)
By monotonicity with respect to $\eta$,
a binary search solves $\mathcal{M'}_k(\delta)$ to any given precision in reasonable time.
Some numerical results as our upper bounds for $k \ge 4$ and performance ratio $\delta$ are illustrated in Fig.~\ref{fig:max-4-sat} and Fig.~\ref{fig:max-5-sat} in Appendix~\ref{sec_comp}.

\subsection{Algorithm ReduceSolve}\label{sat_reduction}

Our second approximation algorithm ReduceSolve (Algorithm~\ref{max_sk_sat_alg2}) is to reduce the formula to another formula with fewer variables and solve it by an exact algorithm for MAX-(NAE-)$k$-SAT.
The high-level idea is the following: deliberately choose variables with low occurrence, such that these variables can be fixed without falsifying too many clauses, then solving the reduced formula still yields a good approximation.

\begin{algorithm}[h]
\caption{ReduceSolve}\label{max_sk_sat_alg2}
\begin{algorithmic}[1]
\REQUIRE $k$-instance $F$, parameter $t$
\ENSURE assignment $\hat{\alpha}$
\STATE initialize $V_e \leftarrow \emptyset$, $F_e = \emptyset$ \label{line_start}
\STATE \textbf{for} $i \leftarrow 1$ to $t$ \textbf{do} \label{loop_begin}
\STATE ~~~~~~choose the variable $v$ in $F$ with the lowest occurrence
\STATE ~~~~~~$V_e \leftarrow V_e \cup \{v\}$
\STATE ~~~~~~\textbf{for} every clause $C$ of $F$ containing $v$ \textbf{do}
\STATE ~~~~~~~~~~~~$F_e \leftarrow F_e \cup C$
\STATE ~~~~~~~~~~~~eliminate $C$ from $F$ \label{loop_end}
%\STATE \textbf{if} $\alpha^{(1)}$ satisfies less than half of the clauses in $F_e$ \textbf{then}
%\STATE ~~~~~~negate every bit of $\alpha^{(1)}$
\STATE solve $F$ by an exact algorithm for MAX-(NAE-)$k$-SAT to get $\alpha^{(1)}$ \label{line_exact}
\STATE $\alpha^{(2)} \leftarrow$ a random partial assignment on $V_e$ \label{half_sat}
\STATE \textbf{return} $\hat{\alpha} \leftarrow \alpha^{(1)} \cup \alpha^{(2)}$
\end{algorithmic}
\end{algorithm}

Line \ref{half_sat} (NAE-)satisfies at least half of the clauses in $F_e$ in expectation,
because every clause in $F_e$ contains at least one variable from $V_e$.
Using the method of conditional probabilities (see Chapter 16 in \cite{alon2016probabilistic}), it is guaranteed to find an $\alpha^{(2)}$ in polynomial time (NAE-)satisfying at least half of the clauses, which makes our algorithm deterministic, provided that the exact algorithm is deterministic (which is the case for Williams's algorithm).

We present our main result on ReduceSolve for MAX-NAE-$k$-SAT (Theorem~\ref{reduce_solve_analytic}).
%whose proof can be found in Appendix~\ref{pf_reduce_solve_analytic}.
The result on MAX-$k$-SAT can be obtained in the same way, which is omitted due to limitation of space.

\begin{theorem}[Result on ReduceSolve]\label{reduce_solve_analytic}
    %If there exists an algorithm for solving MAX-$k$-SAT exactly that runs in $\mathcal{O}(c^n)$ time,
    %then MAX-$k$-SAT has an $\mathcal{O}(c^{n (1 - 2(1-\delta)\xi) ^{\frac{1}{k}}})$-time $\delta$-approximation where $\xi = \frac{2^{k-1} (\eta+k-2) - \eta + 1}{2^k (k-1)}$ and $\eta$ is the average clause length of $F$.

    If there exists an exact algorithm for solving MAX-NAE-$k$-SAT that runs in $\mathcal{O}(c^n)$ time,
    then MAX-NAE-$k$-SAT has an $\mathcal{O}(c^{n (1 - 2(1-\delta)\xi') ^{\frac{1}{k}}})$-time $\delta$-approximation where $\xi' = \frac{1}{2}$ for $k=2$ and $\xi' = \frac{2^{k-1} (\eta+k-4) - 2\eta + 4}{2^k (k-2)}$ for $k \ge 3$.% and $\eta$ is the average clause length of $F$.
\end{theorem}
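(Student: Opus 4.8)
The plan is to analyze Algorithm~\ref{max_sk_sat_alg2} in two stages: first lower-bound the number of clauses that $\hat\alpha$ (NAE-)satisfies in terms of the optimum $s(\alpha^*)$ and the number $m_e := |F_e|$ of clauses moved into $F_e$, and then upper-bound $m_e$ using the greedy low-occurrence elimination. Finally I choose the parameter $t$ (the number of eliminated variables) so as to balance the resulting approximation guarantee against the cost of running the exact subroutine on the surviving $n-t$ variables.

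For the approximation guarantee, I partition the clauses of $F$ into $F_e$ and the surviving formula on the $n-t$ variables outside $V_e$. Since every surviving clause is free of $V_e$-variables, $\alpha^{(1)}$, being optimal for the surviving formula, (NAE-)satisfies at least as many surviving clauses as the restriction of $\alpha^*$ does; call this count $s_1$. By the half-satisfiability remark after line~\ref{half_sat} — which I would make precise by fixing $\alpha^{(1)}$ and randomizing only $V_e$, checking in particular the NAE case where the non-$V_e$ literals of a clause happen to be all equal — the assignment $\alpha^{(2)}$ (NAE-)satisfies at least $m_e/2$ clauses of $F_e$. Writing $s(\alpha^*) = s_1 + s_2$ with $s_2 \le m_e$ the number of $F_e$-clauses satisfied by $\alpha^*$, I obtain $s(\hat\alpha) \ge s_1 + m_e/2 \ge s(\alpha^*) - m_e/2$, hence $s(\hat\alpha)/s(\alpha^*) \ge 1 - m_e/(2 s(\alpha^*))$. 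Invoking Lemma~\ref{km} in the form $s(\alpha^*) \ge \xi' m$, it then suffices to force $m_e \le 2(1-\delta)\xi' m$ to guarantee a $\delta$-approximation.

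For the bound on $m_e$, let $m^{(i)}$ be the number of clauses left after $i$ eliminations. The total occurrence count in a $\le k$-instance with $m^{(i)}$ clauses is at most $k m^{(i)}$, so the minimum occurrence over the $n-i$ remaining variables is at most $k m^{(i)}/(n-i)$; eliminating it removes at most that many clauses, giving the recurrence $m^{(i+1)} \ge m^{(i)}\bigl(1 - \tfrac{k}{n-i}\bigr)$. Telescoping yields $m^{(t)} \ge m \prod_{i=0}^{t-1}\bigl(1-\tfrac{k}{n-i}\bigr) = m\,\binom{n-t}{k}/\binom{n}{k}$, so $m_e \le m\bigl(1 - \binom{n-t}{k}/\binom{n}{k}\bigr)$. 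For fixed $k$ and $t = cn$, the ratio $\binom{n-t}{k}/\binom{n}{k} = \prod_{j=0}^{k-1}\tfrac{n-t-j}{n-j} \to (1-t/n)^k$ as $n\to\infty$, so up to the subexponential slack hidden by $\mathcal{O}(\cdot)$ I may use $m_e \le m\bigl(1-(1-t/n)^k\bigr)$. Choosing $t$ so that $(1-t/n)^k = 1 - 2(1-\delta)\xi'$, i.e. $n-t = n\,(1-2(1-\delta)\xi')^{1/k}$, makes $m_e \le 2(1-\delta)\xi' m$, which is exactly what the approximation step needs; the exact subroutine then runs on $n-t$ variables in time $\mathcal{O}(c^{n-t}) = \mathcal{O}(c^{\,n(1-2(1-\delta)\xi')^{1/k}})$, with all other steps polynomial.

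I expect the last estimate for $m_e$ to be the main obstacle: the greedy product $\prod_{i=0}^{t-1}(1-\tfrac{k}{n-i})$ is in fact strictly below $(1-t/n)^k$ for finite $n$, so equating the two is legitimate only asymptotically. I would resolve this by noting that the convergence $\binom{n-t}{k}/\binom{n}{k} \to (1-t/n)^k$ holds as a ratio tending to $1$, so shaving an $o(n)$ amount off $t$ restores $m_e \le 2(1-\delta)\xi' m$ exactly while perturbing the exponent $n-t$ by only $o(n)$, i.e. by a factor $c^{o(n)} = 2^{o(n)}$ already absorbed into $\mathcal{O}(\cdot)$. This is also precisely the step where tracking the shrinking formula matters: the crude bound $m_e \le tkm/n$ would only give the weaker exponent $1 - 2(1-\delta)\xi'/k$, whereas the recurrence is what upgrades it to the claimed $(1-2(1-\delta)\xi')^{1/k}$.
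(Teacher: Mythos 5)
Your proposal is correct and follows essentially the same route as the paper's proof: the same greedy elimination recurrence $m^{(i+1)} \ge m^{(i)}(1-\tfrac{k}{n-i})$ telescoped to $m^{(t)} \gtrsim m(1-t/n)^k$, the same invocation of Lemma~\ref{km} to convert ``lose at most half of the $m_e$ eliminated clauses'' into the condition $m_e \le 2(1-\delta)\xi' m$, and the same choice $n-t = n(1-2(1-\delta)\xi')^{1/k}$ balancing against the exact subroutine. Your explicit $\binom{n-t}{k}/\binom{n}{k}$ identity and the check of the NAE half-satisfiability of $\alpha^{(2)}$ are slightly more careful renderings of steps the paper treats asymptotically or asserts, but the argument is the same.
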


\begin{proof}
    We analyze Algorithm~\ref{max_sk_sat_alg2}. In the following, step $i$ corresponds to the loop variable $i$ in line \ref{loop_begin}-\ref{loop_end}.
    Let $F^{(i)}$ be the remaining formula after elimination in step $i$, and let $m^{(i)}$ be the number of clauses in $F^{(i)}$.
    Clearly there are $n - i$ variables in $F^{(i)}$. Taking the average we have that the lowest occurrence is upper bounded by $\frac{k m^{(i)}}{n - i - 1}$.
    (It is possible that there are fewer than $n - i$ variables in step $i$ when a variable's all occurrences are eliminated by other variables, then we still think that there are $n - i$ variables with some variable's occurrence being $0$, and the analysis still holds.)
    The number of clauses in $F^{(i + 1)}$ is $m^{(i+1)} \ge m^{(i)} - \frac{k m^{(i)}}{n - i -1}$, because the lowest-occurrence variable occurs in at most $\frac{k m^{(i)}}{n - i -1}$ clauses. Expanding until $m^{(0)} = m$, the following must hold for the number of clauses in $F^{(t)}$:
    \begin{equation*}
        m^{(t)} \ge m \cdot \prod_{i = 1}^{t} (1 - \frac{k}{n - i}) .
    \end{equation*}
    Using the fact that $1 - y = \exp(-y - o(y))$ for $y \rightarrow 0$, we have:
    \begin{equation}
        m^{(t)} \ge m \cdot \exp(-\sum_{i = 1}^{t} \frac{k}{n - i} - o(\sum_{i = 1}^{t} \frac{k}{n - i})) \label{rec2} .
    \end{equation}
    Note that $\sum_{i = 1}^{t} \frac{1}{n - i} = \ln{\frac{n}{n - t}} + O(\frac{1}{n})$. Now assuming $t = \Theta(n)$, we have that $\ln{\frac{n}{n - t}} = \Theta(1)$, so (\ref{rec2}) becomes:
    \begin{equation}
        m^{(t)} \ge m \cdot \exp(-k \ln{\frac{n}{n - t}} - o(1)) = m \cdot (\frac{n}{n - t})^{-k} \cdot (1 - o(1)) \label{rec3} .
    \end{equation}
Let $x < 1$ be a parameter to be fixed later.
Observe that if $m^{(t)} \ge (2 x - 1) m$, we eliminated at most $(2 - 2 x) m$ clauses in the first $t$ steps, and at most $(1 - x) m$ of them are unsatisfied.
As a result, to obtain an assignment falsifying at most $(1 - x) m$ clauses in $F_e$, by (\ref{rec3}) it is sufficient to have:
\begin{equation*}
    m \cdot (\frac{n}{n - t})^{-k} \cdot (1 - o(1)) \ge (2 x - 1) m ,
\end{equation*}
which can be implied by
\begin{equation*}
    t \le (1 - (2 x - 1) ^ {1 / k} - o(1)) n.
\end{equation*}
Choosing $ t = (1 - (2 x - 1) ^ {1 / k} - o(1)) n$, which is $t = \Theta(n)$ as we assumed for (\ref{rec3}), we have that the variables in the remaining formula $F^{(t)}$ is at most $n - t = (2 x - 1) ^ {1 / k} n + o(n)$.

Now we fix parameter $x$.
If at most $(1 - \delta) s(\alpha^*)$ clauses in the maximal NAE-satisfiable subformula are falsified, we definitely have a $\delta$-approximation assignment. In the worst case, all $(1 - x) m$ clauses we falsified are from the maximal NAE-satisfiable subformula, which gives
\begin{equation*}
    (1 - x) m \le (1 - \delta) s(\alpha^*)
\end{equation*}
to meet the condition.
By Lemma~\ref{km} it suffices to have $x = 1 - (1 - \delta) \xi'$, where
\begin{equation*}
    \xi' = \frac{2^{k-1} (\eta+k-4) - 2\eta + 4}{2^k (k-2)} ~\text{for}~ k \ge 3 ~\text{and}~ \xi' = \frac{1}{2} ~\text{for}~ k=2 .
\end{equation*}
Finally, we solve the remaining formula $F^{(t)}$ by an exact algorithm (line \ref{line_exact}) in time
\begin{equation*}
    \mathcal{O}(c^{n (1 - 2(1 - \delta) \xi')^{1 / k} + o(n)}) = \mathcal{O}(c^{n (1 - 2(1 - \delta) \xi')^{1 / k}}) .
\end{equation*}

If we find an optimal assignment $\alpha^{(1)}$ for $F^{(t)}$, by union with $\alpha^{(2)}$ we obtain a $\delta$-approximation assignment, because $\alpha^{(1)}, \alpha^{(2)}$ are on disjoint variables.
%For MAX-NAE-$k$-SAT, substitute $\xi$ with $\xi'$ from Lemma~\ref{km} and we obtain the desired result.
Obviously, line \ref{line_start}-\ref{loop_end} and line~\ref{half_sat} run in polynomial time, giving the theorem.
 \end{proof}

Using Theorem~\ref{exact_rw_bound} for the value $c$,
some numerical results on $k \le 3$ are presented in Fig.~\ref{fig:max-2-sat} and Fig.~\ref{fig:max-3-sat} in Appendix~\ref{sec_comp}.

\newpage

\bibliographystyle{alpha}
\bibliography{nae}

\newcommand{\etalchar}[1]{$^{#1}$}
\begin{thebibliography}{DGH{\etalchar{+}}02}

\bibitem[ABZ05]{DBLP:conf/waoa/AvidorBZ05}
Adi Avidor, Ido Berkovitch, and Uri Zwick.
\newblock Improved approximation algorithms for {MAX} {NAE-SAT} and {MAX}
  {SAT}.
\newblock In {\em Approximation and Online Algorithms, Third International
  Workshop, {WAOA} 2005, Palma de Mallorca, Spain, October 6-7, 2005, Revised
  Papers}, pages 27--40, 2005.

\bibitem[ACIM01]{DBLP:conf/soda/AchlioptasCIM01}
Dimitris Achlioptas, Arthur~D. Chtcherba, Gabriel Istrate, and Cristopher
  Moore.
\newblock The phase transition in 1-in-k {SAT} and {NAE} 3-sat.
\newblock In {\em Proceedings of the Twelfth Annual Symposium on Discrete
  Algorithms, January 7-9, 2001, Washington, DC, {USA.}}, pages 721--722, 2001.

\bibitem[AS16]{alon2016probabilistic}
Noga Alon and Joel~H Spencer.
\newblock {\em The probabilistic method}.
\newblock John Wiley \& Sons, 2016.

\bibitem[Aus07]{DBLP:conf/stoc/Austrin07}
Per Austrin.
\newblock Balanced max 2-sat might not be the hardest.
\newblock In {\em Proceedings of the 39th Annual {ACM} Symposium on Theory of
  Computing, San Diego, California, USA, June 11-13, 2007}, pages 189--197,
  2007.

\bibitem[BMS05]{DBLP:journals/tcs/ByskovMS05}
Jesper~Makholm Byskov, Bolette~Ammitzb{\o}ll Madsen, and Bjarke Skjernaa.
\newblock New algorithms for exact satisfiability.
\newblock {\em Theor. Comput. Sci.}, 332(1-3):515--541, 2005.

\bibitem[DF86]{DBLP:journals/jal/DyerF86}
Martin~E. Dyer and Alan~M. Frieze.
\newblock Planar 3dm is np-complete.
\newblock {\em J. Algorithms}, 7(2):174--184, 1986.

\bibitem[DGH{\etalchar{+}}02]{DBLP:journals/tcs/DantsinGHKKPRS02}
Evgeny Dantsin, Andreas Goerdt, Edward~A. Hirsch, Ravi Kannan, Jon~M.
  Kleinberg, Christos~H. Papadimitriou, Prabhakar Raghavan, and Uwe
  Sch{\"{o}}ning.
\newblock A deterministic (2-2/(k+1))\({}^{\mbox{n}}\) algorithm for k-sat
  based on local search.
\newblock {\em Theor. Comput. Sci.}, 289(1):69--83, 2002.

\bibitem[DSS14]{DBLP:conf/stoc/DingSS14}
Jian Ding, Allan Sly, and Nike Sun.
\newblock Satisfiability threshold for random regular {NAE-SAT}.
\newblock In {\em Symposium on Theory of Computing, {STOC} 2014, New York, NY,
  USA, May 31 - June 03, 2014}, pages 814--822, 2014.

\bibitem[EPT14]{DBLP:journals/tcs/EscoffierPT14}
Bruno Escoffier, Vangelis~Th. Paschos, and Emeric Tourniaire.
\newblock Approximating {MAX} {SAT} by moderately exponential and parameterized
  algorithms.
\newblock {\em Theor. Comput. Sci.}, 560:147--157, 2014.

\bibitem[Gal14]{DBLP:conf/issac/Gall14a}
Fran{\c{c}}ois~Le Gall.
\newblock Powers of tensors and fast matrix multiplication.
\newblock In {\em International Symposium on Symbolic and Algebraic
  Computation, {ISSAC} '14, Kobe, Japan, July 23-25, 2014}, pages 296--303,
  2014.

\bibitem[GHS02]{DBLP:journals/siamcomp/GuruswamiHS02}
Venkatesan Guruswami, Johan H{\aa}stad, and Madhu Sudan.
\newblock Hardness of approximate hypergraph coloring.
\newblock {\em {SIAM} J. Comput.}, 31(6):1663--1686, 2002.

\bibitem[GW95]{DBLP:journals/jacm/GoemansW95}
Michel~X. Goemans and David~P. Williamson.
\newblock Improved approximation algorithms for maximum cut and satisfiability
  problems using semidefinite programming.
\newblock {\em J. {ACM}}, 42(6):1115--1145, 1995.

\bibitem[H{\aa}s01]{DBLP:journals/jacm/Hastad01}
Johan H{\aa}stad.
\newblock Some optimal inapproximability results.
\newblock {\em J. {ACM}}, 48(4):798--859, 2001.

\bibitem[Her14]{DBLP:journals/siamcomp/Hertli14}
Timon Hertli.
\newblock 3-sat faster and simpler - unique-sat bounds for {PPSZ} hold in
  general.
\newblock {\em {SIAM} J. Comput.}, 43(2):718--729, 2014.

\bibitem[Hir03]{DBLP:journals/dam/Hirsch03}
Edward~A. Hirsch.
\newblock Worst-case study of local search for max-k-sat.
\newblock {\em Discrete Applied Mathematics}, 130(2):173--184, 2003.

\bibitem[HKZZ19]{DBLP:conf/stoc/2019}
Thomas~Dueholm Hansen, Haim Kaplan, Or~Zamir, and Uri Zwick.
\newblock Faster k-sat algorithms using biased-ppsz.
\newblock In {\em Proceedings of the 51st Annual {ACM} {SIGACT} Symposium on
  Theory of Computing, {STOC} 2019, Phoenix, AZ, USA, June 23-26, 2019}, 2019.

\bibitem[HZ01]{DBLP:journals/jal/HalperinZ01a}
Eran Halperin and Uri Zwick.
\newblock Approximation algorithms for {MAX} 4-sat and rounding procedures for
  semidefinite programs.
\newblock {\em J. Algorithms}, 40(2):184--211, 2001.

\bibitem[IP01]{DBLP:journals/jcss/ImpagliazzoP01}
Russell Impagliazzo and Ramamohan Paturi.
\newblock On the complexity of k-sat.
\newblock {\em J. Comput. Syst. Sci.}, 62(2):367--375, 2001.

\bibitem[KW11]{DBLP:conf/iwpec/KimW11}
Eun~Jung Kim and Ryan Williams.
\newblock Improved parameterized algorithms for above average constraint
  satisfaction.
\newblock In {\em Parameterized and Exact Computation - 6th International
  Symposium, {IPEC} 2011, Saarbr{\"{u}}cken, Germany, September 6-8, 2011.
  Revised Selected Papers}, pages 118--131, 2011.

\bibitem[Lic82]{DBLP:journals/siamcomp/Lichtenstein82}
David Lichtenstein.
\newblock Planar formulae and their uses.
\newblock {\em {SIAM} J. Comput.}, 11(2):329--343, 1982.

\bibitem[Liu18]{liu2018ksat}
Sixue Liu.
\newblock Chain, generalization of covering code, and deterministic algorithm
  for k-sat.
\newblock In {\em 45th International Colloquium on Automata, Languages, and
  Programming, {ICALP} 2018, July 9-13, 2018, Prague, Czech Republic}, pages
  88:1--88:13, 2018.

\bibitem[LLZ02]{DBLP:conf/ipco/LewinLZ02}
Michael Lewin, Dror Livnat, and Uri Zwick.
\newblock Improved rounding techniques for the {MAX} 2-sat and {MAX} {DI-CUT}
  problems.
\newblock In {\em Integer Programming and Combinatorial Optimization, 9th
  International {IPCO} Conference, Cambridge, MA, USA, May 27-29, 2002,
  Proceedings}, pages 67--82, 2002.

\bibitem[Mor88]{moret1988planar}
Bernard~ME Moret.
\newblock Planar nae3sat is in p.
\newblock {\em ACM SIGACT News}, 19(2):51--54, 1988.

\bibitem[MS11]{DBLP:conf/stoc/MoserS11}
Robin~A. Moser and Dominik Scheder.
\newblock A full derandomization of sch{\"{o}}ning's k-sat algorithm.
\newblock In {\em Proceedings of the 43rd {ACM} Symposium on Theory of
  Computing, {STOC} 2011, San Jose, CA, USA, 6-8 June 2011}, pages 245--252,
  2011.

\bibitem[PPSZ05]{DBLP:journals/jacm/PaturiPSZ05}
Ramamohan Paturi, Pavel Pudl{\'{a}}k, Michael~E. Saks, and Francis Zane.
\newblock An improved exponential-time algorithm for \emph{k}-sat.
\newblock {\em J. {ACM}}, 52(3):337--364, 2005.

\bibitem[Rag08]{DBLP:conf/stoc/Raghavendra08}
Prasad Raghavendra.
\newblock Optimal algorithms and inapproximability results for every csp?
\newblock In {\em Proceedings of the 40th Annual {ACM} Symposium on Theory of
  Computing, Victoria, British Columbia, Canada, May 17-20, 2008}, pages
  245--254, 2008.

\bibitem[Sch99]{DBLP:conf/focs/Schoning99}
Uwe Sch{\"{o}}ning.
\newblock A probabilistic algorithm for k-sat and constraint satisfaction
  problems.
\newblock In {\em 40th Annual Symposium on Foundations of Computer Science,
  {FOCS} '99, 17-18 October, 1999, New York, NY, {USA}}, pages 410--414, 1999.

\bibitem[SSZ16]{DBLP:conf/focs/SlySZ16}
Allan Sly, Nike Sun, and Yumeng Zhang.
\newblock The number of solutions for random regular {NAE-SAT}.
\newblock In {\em {IEEE} 57th Annual Symposium on Foundations of Computer
  Science, {FOCS} 2016, 9-11 October 2016, Hyatt Regency, New Brunswick, New
  Jersey, {USA}}, pages 724--731, 2016.

\bibitem[TSSW00]{DBLP:journals/siamcomp/TrevisanSSW00}
Luca Trevisan, Gregory~B. Sorkin, Madhu Sudan, and David~P. Williamson.
\newblock Gadgets, approximation, and linear programming.
\newblock {\em {SIAM} J. Comput.}, 29(6):2074--2097, 2000.

\bibitem[Wil05]{DBLP:journals/tcs/Williams05}
Ryan Williams.
\newblock A new algorithm for optimal 2-constraint satisfaction and its
  implications.
\newblock {\em Theor. Comput. Sci.}, 348(2-3):357--365, 2005.

\bibitem[Wil07]{williams2007algorithms}
Ryan Williams.
\newblock {\em Algorithms and resource requirements for fundamental problems}.
\newblock PhD thesis, Carnegie Mellon University, 2007.

\bibitem[Wil13]{DBLP:journals/siamcomp/Williams13}
Ryan Williams.
\newblock Improving exhaustive search implies superpolynomial lower bounds.
\newblock {\em {SIAM} J. Comput.}, 42(3):1218--1244, 2013.

\bibitem[Zwi98]{DBLP:conf/soda/Zwick98}
Uri Zwick.
\newblock Approximation algorithms for constraint satisfaction problems
  involving at most three variables per constraint.
\newblock In {\em Proceedings of the Ninth Annual {ACM-SIAM} Symposium on
  Discrete Algorithms, 25-27 January 1998, San Francisco, California, {USA.}},
  pages 201--210, 1998.

\bibitem[Zwi02]{DBLP:conf/soda/Zwick02a}
Uri Zwick.
\newblock Computer assisted proof of optimal approximability results.
\newblock In {\em Proceedings of the Thirteenth Annual {ACM-SIAM} Symposium on
  Discrete Algorithms, January 6-8, 2002, San Francisco, CA, {USA.}}, pages
  496--505, 2002.

\end{thebibliography}

\newpage

\appendix

\section{Some Related Approximation Algorithms}

\subsection{Polynomial-time Approximations}\label{subsec_poly_table}

%See Fig.~\ref{poly_inapx}.

\begin{figure*}[h]
%\begin{minipage}[t]{0.48\textwidth}
\centering
\renewcommand{\arraystretch}{1.2}
\caption{The lower bound denotes the current best performance ratios of polynomial-time approximation algorithms; the upper bound denotes the inapproximable thresholds unless \textsf{P} = \textsf{NP}. MAX-E-NAE-$3$-SAT restricts each clause to have exactly three literals.
Assuming the Unique Games Conjecture true, many upper bound results in this figure can be improved \cite{DBLP:conf/stoc/Austrin07,DBLP:conf/stoc/Raghavendra08}.
}
\label{poly_inapx}
\centering%\renewcommand{\arraystretch}{1.2}
\begin{tabular}{|c|c|c|}\hline
& Lower bound & Upper bound \\\hline%\hline
 MAX-$2$-SAT & $0.940$ \cite{DBLP:conf/ipco/LewinLZ02} & $0.955$ \cite{DBLP:journals/jacm/Hastad01} \\%\hline
 MAX-$3$-SAT & $0.875$ \cite{DBLP:conf/soda/Zwick02a} & $0.875$  \cite{DBLP:journals/jacm/Hastad01} \\%\hline
 MAX-$4$-SAT & $0.872$ \cite{DBLP:journals/jal/HalperinZ01a} & $0.875$ \cite{DBLP:journals/jacm/Hastad01}\\%\hline
 MAX-$k$-SAT & $0.8434$ \cite{DBLP:conf/waoa/AvidorBZ05} & $0.875$  \cite{DBLP:journals/jacm/Hastad01} \\\hline%\hline
 MAX-NAE-$2$-SAT & $0.878$ \cite{DBLP:journals/jacm/GoemansW95} & $0.917$ \cite{DBLP:journals/siamcomp/TrevisanSSW00} \\%\hline
 MAX-E-NAE-$3$-SAT & $0.878$ \cite{DBLP:conf/soda/Zwick98} & $0.978$  \cite{DBLP:conf/soda/Zwick98}\\%\hline
 MAX-NAE-$k$-SAT & $0.8279$ \cite{DBLP:conf/waoa/AvidorBZ05} & $0.875$  \cite{DBLP:journals/jacm/Hastad01}   \\\hline
\end{tabular}
%\end{minipage}
%\hskip 20pt
%\begin{minipage}[t]{0.47\textwidth}
\end{figure*}

\subsection{Algorithm by Hirsch}\label{Hirsch_intro}

RandomWalk (Algorithm~\ref{MAX_RW}) modifies Sch\"oning's Random Walk for $k$-SAT \cite{DBLP:conf/focs/Schoning99} in the following ways:
(\romannumeral1) $\hat{\alpha}$ is iteratively updated as the assignment satisfying the most number of clauses so far (lines \ref{rw_update1}-\ref{rw_update2} of Algorithm~\ref{MAX_RW}); (\romannumeral2) choosing a random unsatisfied clause instead of an arbitrary one (line \ref{clause_choosing} of Algorithm~\ref{MAX_RW}).
The motivation is that we do not know the optimal assignment in advance, so there is no termination condition as for $k$-SAT (hitting an assignment satisfying all clauses).
But note that what we ask for is only an approximation, thus it remains hopeful to work out as for $k$-SAT by choosing a random unsatisfied clause to decrease the Hamming distance to a target approximation assignment.

\subsection{Algorithms by EPT}\label{EPT_intro}

Now we introduce the algorithms proposed by Escoffier et al. in \cite{DBLP:journals/tcs/EscoffierPT14} for MAX-SAT approximations and generalize them to MAX-$k$-SAT approximations.

\paragraph{Algorithm $\mathcal{A}$.}
%\textbf{Algorithm $\mathcal{A}$}:
Let $p,q$ be two integers such that $p / q = (\delta - \ell) / (1 - \ell)$, where $\ell$ is the performance ratio of any given polynomial-time approximation algorithm (e.g., one of the algorithms which give the lower bounds in Fig.~\ref{poly_inapx} in Appendix~\ref{subsec_poly_table}).
Build $q$ subsets of variables, each one includes $np/q$ variables, where each variable occurs in exactly $p$ subsets.
For each subset, enumerating all possible truth assignments on its variables and run the polynomial-time approximation algorithm on the remaining formula.
Return the complete assignment with the maximum number of satisfying clauses.
For MAX-$k$-SAT approximations, it can be shown that $\mathcal{A}$ has performance ratio $\delta$ and runs in time $\mathcal{O}(2^{n(\delta - \ell) / (1 - \ell)})$.
This relies on the fact that every clause in the optimal solution contains at least one true literal.
However, for MAX-NAE-$k$-SAT, as discussed in the curse of NAE, there is no guarantee of the lower bound of satisfied clauses by fixing the assignments of a subset of variables.
Therefore $\mathcal{A}$ does not apply to MAX-NAE-$k$-SAT approximation.

\paragraph{Algorithm $\mathcal{B}$.}
Let $p,q$ be two integers such that $p / q = \delta$.
Build $q$ subsets of variables as above.
Remove from the instance the variables not in this subset and all empty clauses.
Solve the remaining instance by an exact algorithm for MAX-SAT and complete this assignment with arbitrary truth values.
It is not hard to see
that $\mathcal{B}$ for MAX-$k$-SAT approximation has performance ratio $\delta$ and runs in time $\mathcal{O}(c^{n \delta})$, where $\mathcal{O}(c^n)$ is the upper bound of the exact MAX-$k$-SAT algorithm.
As discussed in the curse of NAE, $\mathcal{B}$ does not work for MAX-NAE-$k$-SAT approximation for the same reason above.

\paragraph{Algorithm $\mathcal{C}$.}
Let $p,q$ be two integers such that $p / q = 2 \delta - 1$.
Build $q$ subsets of variables as above.
For each subset, assign weight $2$ to every clause containing only variables in this subset, and weight $1$ to every clause containing at least one other variable not in this subset.
Remove from the instance the variables not in this subset and all empty clauses.
Solve the remaining instance using an exact algorithm for weighted MAX-SAT.
Escoffier et al. show that $\mathcal{C}$ for MAX-$k$-SAT approximation has performance ratio $\delta$ and runs in time $\mathcal{O}(c^{n (2\delta - 1)})$, where $\mathcal{O}(c^n)$ is the upper bound of the exact algorithm for solving weighted MAX-$(k+1)$-SAT.
But for the same reason behind previous algorithms, this does not work for MAX-NAE-$k$-SAT approximation.
Currently we do not know any exact algorithm for (weighted) MAX-$k$-SAT with better than $\mathcal{O}(2^n)$ running time when $k \ge 3$, therefore $\mathcal{O}(2^{n (2 \delta - 1)})$ is the upper bound for $\mathcal{C}$.

%\section{Remaining Proofs}

\section{Proof of Lemma~\ref{dls_upper_bound}}\label{ksat_lemma}

First of all, we review the definition of \emph{chain} and the lemma for the running time of algorithm \textsf{DLS} in \cite{liu2018ksat}.
\begin{definition}[\cite{liu2018ksat}]\label{chain_def}
    Given integers $k \ge 3$ and $\tau \ge 1$,
    a $\tau$-\emph{chain} $\mathcal{S}^{(k)}$ is a sequence of $\tau$ $k$-clauses $\langle C_1, \dots, C_{\tau} \rangle$ satisfies that $\forall i, j \in [\tau]$, $V(C_i) \cap V(C_j) = \emptyset$ if and only if $|i - j | > 1$.
\end{definition}
That is, a chain is a sequence of clauses with each of them must and only share variable with the adjacent clauses.
Therefore, a conjugate pair is a $2$-chain since there are only two clauses with the same variables.

The characteristic value of a chain is the solution $\lambda$ to the Linear Programming $\mathcal{LP}_k$ defined in Lemma~\ref{dls_upper_bound} with solution space containing all satisfying assignments of the chain.

\begin{lemma}[\cite{liu2018ksat}]\label{dls_upper_bound_ksat}
    Given $k$-instance $F$ and a set $\mathcal{I}$ of independent chains, \textsf{DLS} runs in time $T_{\text{DLS}} = \mathcal{O}((\frac{2(k-1)}{k})^{n'} \cdot \prod_{i}^{\chi} {\lambda_i}^{-\nu_i})$, where $n'$ is the number of variables not occurring in $\mathcal{I}$, $\chi$ is the number of different types of chains, $\lambda_i$ is the characteristic value of chain $\mathcal{S}_i$, and $\nu_i$ is number of chains in $\mathcal{I}$ with the same solution space to $\mathcal{S}_i$.
\end{lemma}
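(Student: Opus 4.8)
The plan is to realize \textsf{DLS} as a deterministic simulation of a \emph{generalized} Sch\"oning random walk whose starting point is drawn from a carefully chosen product distribution, and to bound its running time by the reciprocal of that walk's success probability (all up to the subexponential factors absorbed by $\mathcal{O}$). First I would recall Sch\"oning's key estimate: for a fixed satisfying assignment $\alpha^*$, a walk of $O(n)$ steps started from an assignment $a$ reaches $\alpha^*$ with probability at least $(\frac{1}{k-1})^{d(a,\alpha^*)}$, because at each step at least one of the $\le k$ literals of the chosen unsatisfied clause disagrees with $\alpha^*$, so the walk moves toward $\alpha^*$ with probability $\ge \frac{1}{k}$ against $\le \frac{k-1}{k}$. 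The point is that this is a \emph{global} bound in the full Hamming distance, which makes the block-wise factorization below immediate.

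Next I would fix the initial distribution $D$ to be a product over the variable-disjoint blocks induced by $\mathcal{I}$: uniform on each of the $n'$ variables outside $\mathcal{I}$, and, on the variables of each chain, the distribution $\pi$ supported on its solution space $A$ that attains the characteristic value $\lambda_i$ in $\mathcal{LP}_k$. Since the chains in $\mathcal{I}$ are independent and $D$ is a product distribution, Sch\"oning's bound factorizes because $d(a,\alpha^*)=\sum_c d(a_c,\alpha^*_c)$:
\[
\Pr[\text{walk reaches } \alpha^*] \ge \E_{a \sim D}\left[(\tfrac{1}{k-1})^{d(a,\alpha^*)}\right] = \prod_{c}\E_{a_c \sim D_c}\left[(\tfrac{1}{k-1})^{d(a_c,\alpha^*_c)}\right].
\]
For a single free variable the factor is $\frac{1}{2}+\frac{1}{2(k-1)}=\frac{k}{2(k-1)}$, contributing $(\frac{k}{2(k-1)})^{n'}$ overall; for a chain of type $i$ the factor is $\sum_{a\in A}\pi(a)(\frac{1}{k-1})^{d(a,\alpha^*_c)}=\lambda_i$, which by the third constraint of $\mathcal{LP}_k$ equals $\lambda_i$ for \emph{every} target $\alpha^*_c\in A$. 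Here I would stress that $\alpha^*$ restricted to a chain lies in $A$, since $\alpha^*$ satisfies the chain, so this constraint is exactly the one that applies. Collecting $\nu_i$ chains of each of the $\chi$ types gives success probability at least $(\frac{k}{2(k-1)})^{n'}\prod_i \lambda_i^{\nu_i}$, whose reciprocal is the claimed $T_{\text{DLS}}$.

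The remaining and hardest step is the derandomization: replacing the random start by a deterministic \emph{generalized covering code} realizing $D$, and replacing the walk by a deterministic bounded-radius branching search that, at each unsatisfied clause, branches on which literal to flip and prunes once the radius budget is exhausted. This code is a product of local codes, one per block, of per-block size roughly $\lambda_i^{-1}$ for chains and $\frac{2(k-1)}{k}$ per free variable; its existence at this size is precisely what the feasibility of $\mathcal{LP}_k$ certifies, via a greedy/probabilistic argument that the $D$-weighted coverage of every target is at least $\lambda_i$. I expect the main obstacle to be proving that this deterministic ball-search reproduces the randomized walk's per-block coverage while losing only a subexponential factor, and that the product code can be enumerated within the stated time; the balancedness of $\mathcal{LP}_k$ --- equal coverage $\lambda_i$ for all $a^*\in A$ --- is exactly the property ensuring the code succeeds against an \emph{unknown} target $\alpha^*$.
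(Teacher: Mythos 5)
The paper does not prove this lemma at all: it is imported verbatim from \cite{liu2018ksat} as a black box, and the surrounding appendix text only specializes it to conjugate pairs (a single chain type with solution space $A=\{0,1\}^k\setminus\{0^k,1^k\}$, $n'=n-k|\mathcal{P}|$) to deduce Lemma~\ref{dls_upper_bound}. So there is no in-paper argument to compare against; what you have written is a reconstruction of the proof in the cited reference, and its skeleton --- a product initial distribution over the variable-disjoint blocks induced by $\mathcal{I}$, the factorization of Sch\"oning's bound $\E_{a\sim D}[(\frac{1}{k-1})^{d(a,\alpha^*)}]$ over blocks, the per-free-variable factor $\frac{k}{2(k-1)}$, and the identification of the per-chain factor with $\lambda_i$ via the equal-coverage constraint of $\mathcal{LP}_k$ --- is exactly right and is the correct reading of why the bound has the stated form.

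Two caveats on the part you defer. First, the deterministic ball search is not the naive ``branch on which of the $k$ literals to flip'' procedure: that recursion has branching factor $k$ and yields $k^r$ per ball (Dantsin et al.'s bound), whereas the lemma needs base $k-1$ to match Sch\"oning; achieving $(k-1)^{r+o(n)}$ deterministically is precisely Moser and Scheder's contribution and cannot be waved at as a routine pruned search. Second, your LP is written over $\{0,1\}^k$, which is correct for conjugate pairs but not for general $\tau$-chains, whose adjacent clauses share variables; in the general lemma each chain type carries its own solution space (all satisfying assignments of that chain) and its own instance of the LP template. Neither issue affects the validity of the outline, but both are where the actual content of the cited proof lives.
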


For conjugate pairs, it is easy to see that the solution space is just $A = \{0, 1\}^k \backslash \{0^k, 1^k\}$.
In our branching algorithm for NAE-$k$-SAT (Algorithm~\ref{br_k}), there is only one type of chain, which is the conjugate pair.
So we obtain:
\begin{itemize}
    \item $n' = n - k |\mathcal{P}|$.
    \item $\chi = 1$.
    \item $\nu_1 = |\mathcal{I}|$ is equal to the $|\mathcal{P}|$ defined in Lemma~\ref{dls_upper_bound}.
    \item $\lambda_1$ is equal to the $\lambda$ defined in Lemma~\ref{dls_upper_bound}.
\end{itemize}
Therefore Lemma~\ref{dls_upper_bound_ksat} immediately implies Lemma~\ref{dls_upper_bound}.

Note that in \cite{liu2018ksat}, the branching algorithm guarantees to find a subsequent clause sharing at most two variables with the current clause, therefore there is no conjugate pairs during its execution.

\section{Comparison results on MAX-NAE-$k$-SAT approximation}\label{sec_comp}

\begin{figure*}[h]
        \caption{The $x$-axis is the performance ratio $\delta$ and the $y$-axis is the base $c$ in the upper bound $c^n$.
        \emph{EPT(MAX-SAT)} \cite{DBLP:journals/tcs/EscoffierPT14} under our generalization is currently the fastest for MAX-$k$-SAT approximation, but it does not apply to MAX-NAE-$k$-SAT approximation (see \S{\ref{rw_apx}}).
        \emph{Hirsch} \cite{DBLP:journals/dam/Hirsch03} under our generalization is the fastest algorithm for MAX-NAE-$k$-SAT approximation (see \S{\ref{rw_subsection}}).
        Our tighter result outperforms the generalized \emph{Hirsch} on all $\delta$ and all $k$, and is even better than \emph{EPT(MAX-SAT)} when $\delta$ closes to $1$ and $k=3$.
        }
        \label{fig:four_figs}
        \centering
        \begin{minipage}[t]{0.410\textwidth}
            \centering
            \includegraphics[width=\textwidth]{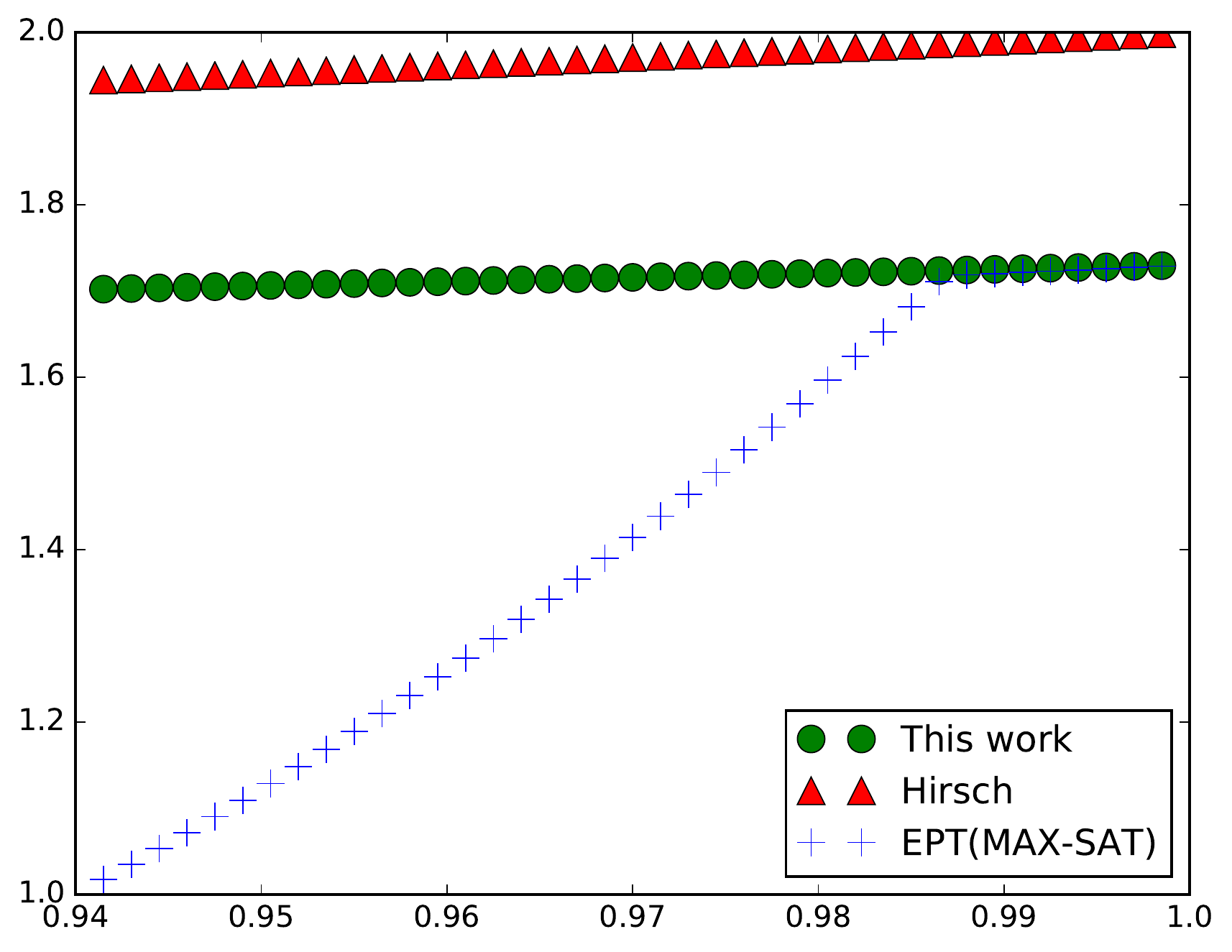}
            \subcaption{MAX-NAE-$2$-SAT}
            \label{fig:max-2-sat}
        \end{minipage}
        \hfill
        \begin{minipage}[t]{0.410\textwidth}
            \centering
            \includegraphics[width=\textwidth]{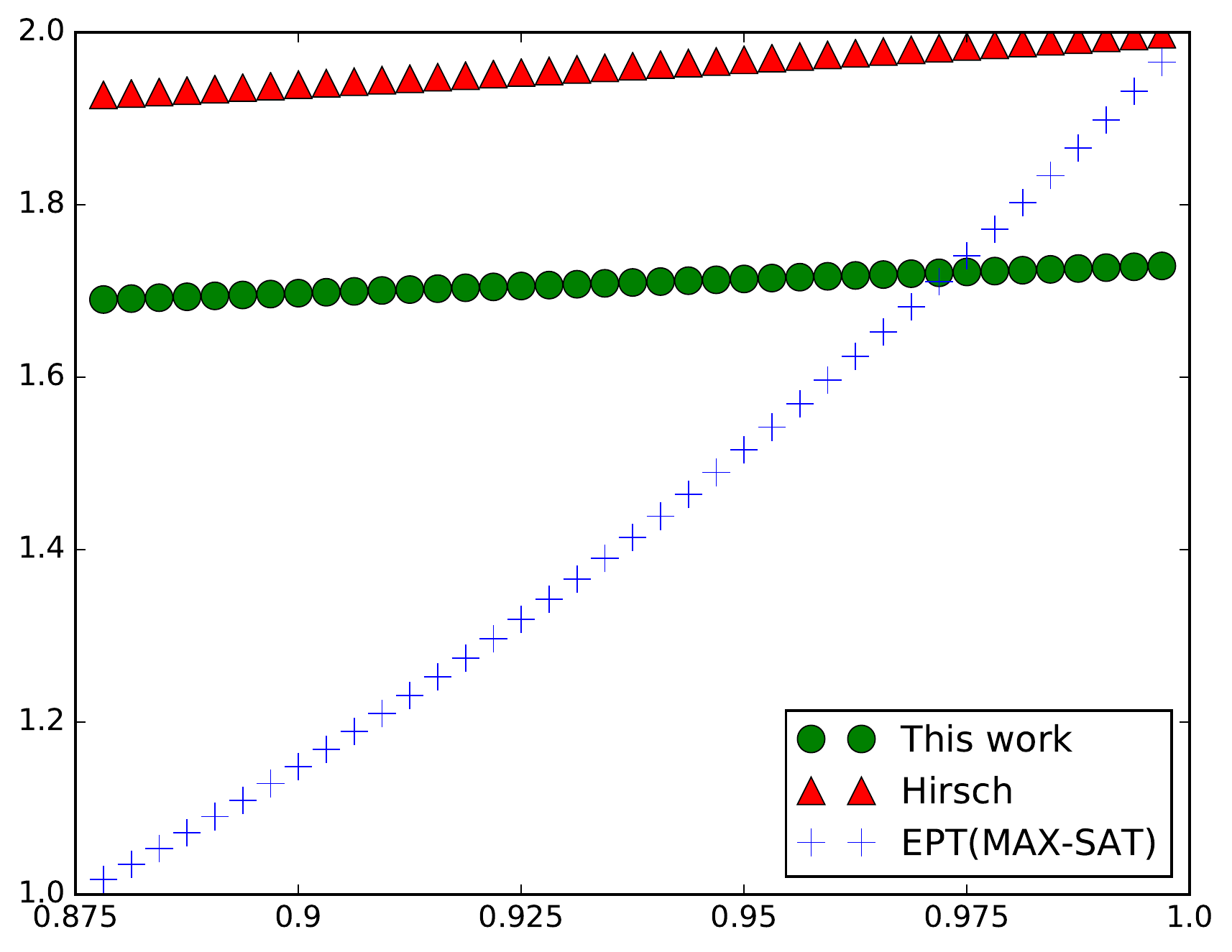}
            \subcaption{MAX-NAE-$3$-SAT}
            \label{fig:max-3-sat}
        \end{minipage}
        %\vskip\baselineskip
        \begin{minipage}[t]{0.410\textwidth}
            \centering
            \includegraphics[width=\textwidth]{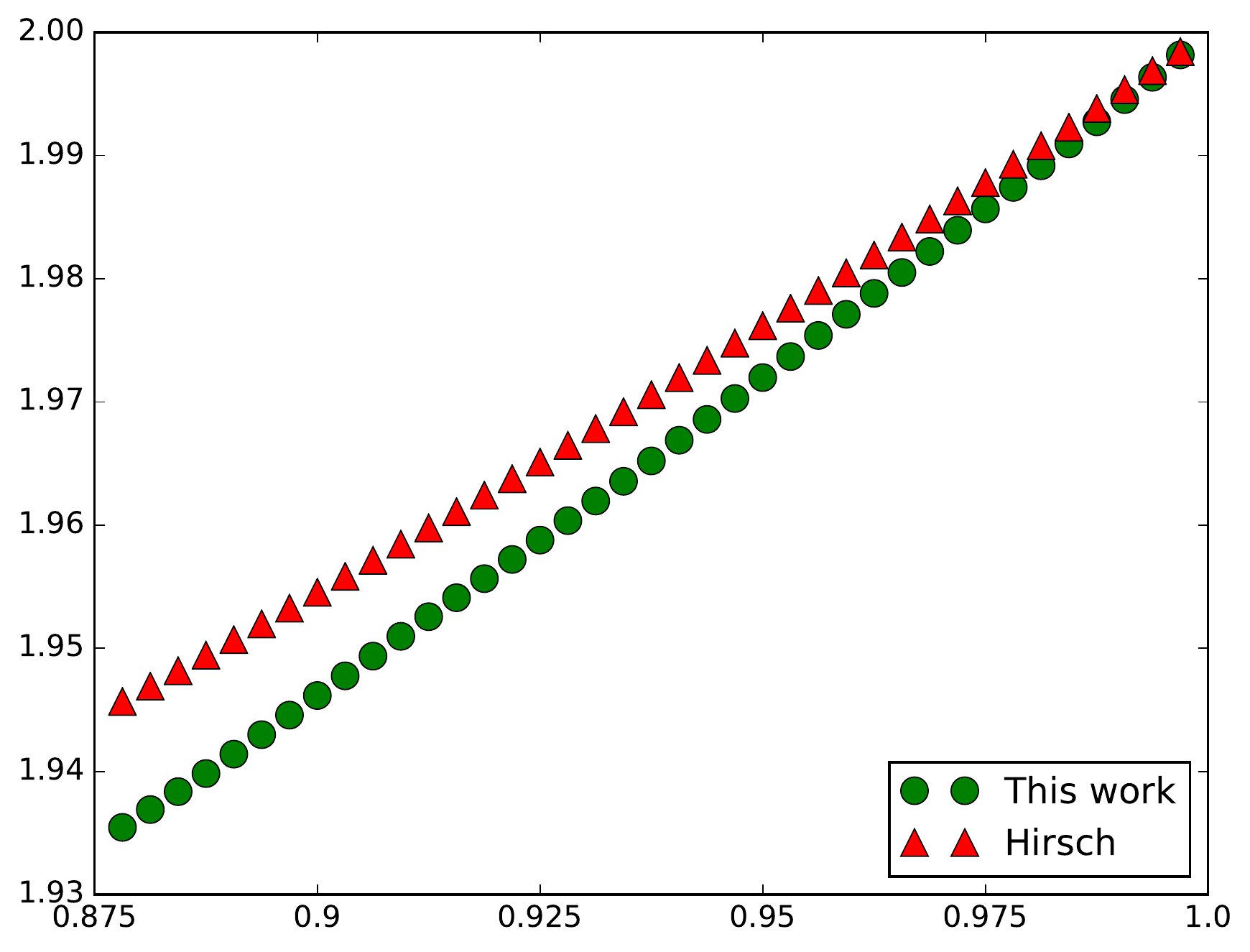}
            \subcaption{MAX-NAE-$4$-SAT}
            \label{fig:max-4-sat}
        \end{minipage}
        \hfill
        \begin{minipage}[t]{0.410\textwidth}
            \centering
            \includegraphics[width=\textwidth]{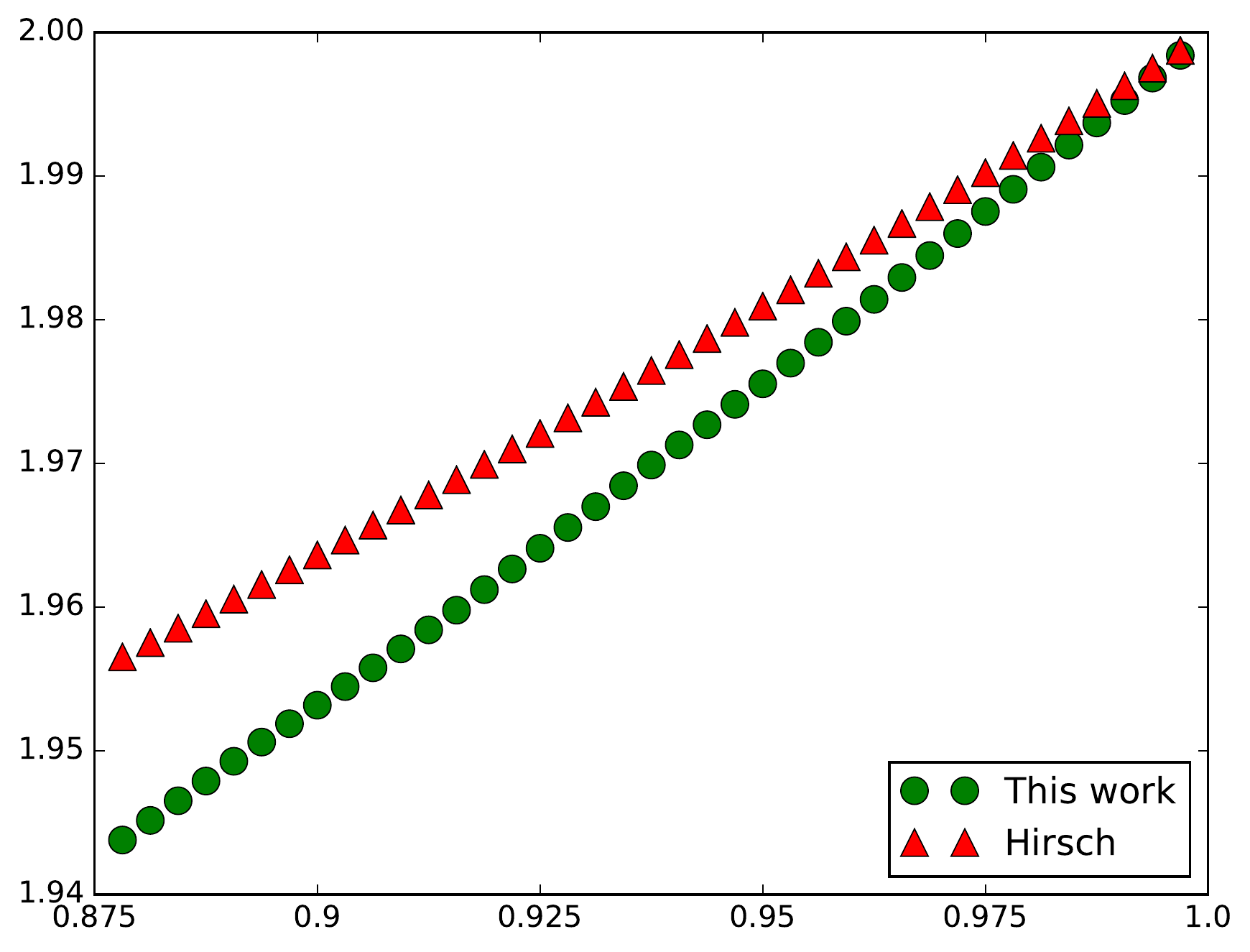}
            \subcaption{MAX-NAE-$5$-SAT}
            \label{fig:max-5-sat}
        \end{minipage}
    \end{figure*}

\end{document}